\newtheorem{theorem}{Theorem}
\newtheorem{lemma}{Lemma}
\newtheorem{definition}{Definition}
\newtheorem{researchQuestion}{RQ}
\newcommand{\ie}{i.e.}
\let\oldnl\nl
\newcommand{\nonl}{\renewcommand{\nl}{\let\nl\oldnl}}
\begin{document}

\begin{frontmatter}



\title{Constrained locating arrays for combinatorial interaction testing}


\author{Hao Jin\corref{cor1}}
\ead{k-kou@ist.osaka-u.ac.jp}
\author{Tatsuhiro Tsuchiya}
\address{1-5 Yamadaoka, Suita-Shi, Osaka 565-0871, Japan}

\begin{abstract}
This paper introduces the notion of \textit{Constrained Locating Arrays (CLAs)},
mathematical objects which can be used for fault localization in software testing.
CLAs extend ordinary locating arrays to make them applicable
to testing of systems that have constraints on test parameters.
Such constraints are common in real-world systems; thus CLA
enhances the applicability of locating arrays to practical
testing problems.
The paper also proposes an algorithm for constructing CLAs.
Experimental results show that the proposed algorithm scales to problems of
practical sizes.
\end{abstract}

\begin{keyword}
Combinatorial interaction testing \sep Locating arrays
\sep Covering arrays \sep Software testing

94C12 \sep 05B30 \sep 68R05
\end{keyword}

\end{frontmatter}


\section{Introduction}
\label{sec:introduction}

\textit{Combinatorial interaction testing} is a well-known strategy for
software testing.
In the strategy, a \textit{System Under Test (SUT)} is modeled as
a finite set of test parameters or \textit{factors}
and  every interaction of interest
is exercised by at least one test. Empirical results suggest
that testing interactions involving a fairly small number of factors,
typically two or three, suffices to reveal most of latent faults. Many
studies have been developed to construct small test sets 
for combinatorial interaction testing.
Such test sets are often called \textit{Covering Arrays (CAs)}.
Surveys on these studies can be found in, for example, \cite{Colbourn2004,Grindal2005,Nie:2011}.

An important direction of extending the capability of combinatorial
interaction testing is to add fault localization capability to it. 
 \textit{Locating Arrays (LAs)} can be used as test suites that provide this capability~\cite{colbourn_locatingarray2008}.
In \cite{colbourn_locatingarray2008} LAs of a few different types are defined.
For example, a $(d, t)$-LA
enables to locate a set of $d$ failure-triggering $t$-way interactions
using the test outcome.

The purpose of this paper is to extend the notion of LAs to expand the
applicability to practical testing problems. Specifically, we propose
\textit{Constrained Locating Arrays (CLAs)} which can be used to detect
and locate failure-triggering interactions in the presence of \textit{constraints}.
Constraints, which prohibit some particular tests, are common in
real-world systems.  Constraint handling has been well studied in the field of combinatorial interaction testing~\cite{8102999}. The main focus
of the previous studies is on constructing test sets, often called
a \textit{Constrained Covering Array (CCA)}, that consist only of
constraint-satisfying tests and cover all interactions that can occur
in constraint-satisfying tests.

CLAs requires additional considerations about constraints. Specifically,
constraints may make it impossible to distinguish a failure-triggering
interaction or set of such interactions
from another; hence a special treatment must be needed to
deal with such an inherently indistinguishable pair.
By extending LAs with the concept of
\textit{distinguishability}, we provide the formal definition
of CLAs.
We also propose a generation method for CLAs and demonstrate that 
the generation method can scale to problems of practical sizes.


The rest of the paper is organized as follows.
Section~\ref{sec:pre} describes the SUT model
and the definition of locating arrays, as well as some related notions.
Section~\ref{sec:cla} presents the definition of
CLAs and some basic theorems about them.
Section~\ref{sec:generation} presents a computational method for generating CLAs. 
Section~\ref{sec:evaluation} shows experimental results obtained by applying the method 
to a number of problem instances.
Section~\ref{sec:related} summarizes related work.
Section~\ref{sec:summary} concludes the paper with possible future directions of work.

\section{Preliminaries}
\label{sec:pre}

\begin{figure}
{\footnotesize
\centering
\begin{tabular}{|l||l|l|l|l|l|}
\hline
factors & $F_1$: Display  & $F_2$: Email  & $F_3$: Camera  & $F_4$: Video  & $F_5$: Video \\
        &                 & \multicolumn{1}{|r|}{Viewer}   &          & \multicolumn{1}{|r|}{Camera}  & \multicolumn{1}{|r|}{Ringtones}  \\ \hline
values  & 0 : 16 MC & 0 : Graphical & 0 : 2 MP & 0 : Yes & 0 : Yes \\
        & 1 : 8 MC  & 1 : Text & 1 : 1 MP & 1 : No  & 1 : No \\
        & 2 : BW    & 2 : None & 2 : None &  &  \\ \hline
constraints
   & \multicolumn{5}{|l|}{$F_2 = 0 \Rightarrow F_1 \neq 2$
     }\\ & \multicolumn{5}{|r|}{
    \hfill {\scriptsize Graphical email viewer requires color display} } \\
   & \multicolumn{5}{|l|}{$F_3 = 0 \Rightarrow F_1 \neq 2$
     }\\ & \multicolumn{5}{|r|}{
     \hfill {\scriptsize 2 Megapixel camera requires color display}} \\
   & \multicolumn{5}{|l|}{$F_2 = 0 \Rightarrow F_3 \neq 0$
     }\\ & \multicolumn{5}{|r|}{
     \hfill {\scriptsize \quad Graphical email viewer not supported with 2 Megapixel camera}} \\
   & \multicolumn{5}{|l|}{$F_1 = 1 \Rightarrow F_3 \neq 0$
     }\\ & \multicolumn{5}{|r|}{
     \hfill {\scriptsize 8 Million color display does not support 2 Megapixel camera}} \\
   & \multicolumn{5}{|l|}{$F_4 = 0 \Rightarrow (F_3 \neq 2 \land F_1 \neq 2)$
     }\\ & \multicolumn{5}{|r|}{
     \hfill {\scriptsize Video camera requires camera and color display}} \\
   & \multicolumn{5}{|l|}{$F_5 = 0 \Rightarrow F_4 = 0$
     }\\ & \multicolumn{5}{|r|}{
     \hfill {\scriptsize Video ringtones cannot occur with No video camera}} \\
   & \multicolumn{5}{|l|}{$\neg(F_1 = 0 \land F_2 =1 \land F_3 = 0)$
     }\\ & \multicolumn{5}{|r|}{
    \hfill  {\scriptsize The combination of 16 Million colors, Text email}} \\
   & \multicolumn{5}{|l|}{
     \hfill {\scriptsize viewer and 2 Megapixel camera will not be supported}}
   \\ \hline
\end{tabular}
}
\caption{Example of an SUT \cite{Cohen:2008}. }
\label{fig:example}
\end{figure}

An SUT is modeled as $\langle \mathcal{F}, \mathcal{S}, \phi\rangle$ where
$\mathcal{F} = \{F_1, F_2, ..., F_k\}$ is a set of factors,
$\mathcal{S} = \{S_1, S_2, ..., S_k\}$ is a set of domains for the factors, and
$\phi: S_1 \times ... \times S_k \rightarrow \{true, false\}$ is a mapping that represents constraints.
Each domain $S_i$ consists of two or more consecutive integers ranging from 0; i.e.,
$S_i = \{0, 1, ..., |S_i| - 1\}$ ($|S_i| > 1$).
A \textit{test} is an element of $S_1 \times S_2 \times ... \times S_k$.
A test $\pmb{\sigma}$ is \textit{valid} if and only if (iff) it satisfies the constraints
$\phi$, i.e., $\phi(\pmb{\sigma}) = true$.
Given an SUT, we denote the set of all valid tests as $\mathcal{R}$.
For a set of $t\ (0\leq t\leq k)$ factors, $\{F_{i_1},...., F_{i_t}\} \subseteq \mathcal{F}$,
the set $\{(i_1, \sigma_1), ...., (i_t, \sigma_t)\}$ such that $\sigma_j \in S_j$ for
all $j$ $(1\leq j \leq t)$ is
a $t$-\textit{way interaction} or an interaction of \textit{strength} $t$.
Hence a test contains or \textit{covers} $({k \atop t})$ $t$-way interactions.
Note that a $k$-way interaction $\{(1, \sigma_1), ..., (k, \sigma_k)\}$ and
a test $\pmb{\sigma} = (\sigma_1, ..., \sigma_k)$ can be treated interchangeably.
Thus we write $T \subseteq \pmb{\sigma}$ iff a test $\pmb{\sigma}$ covers an interaction $T$. 
It should be noted that the only 0-way is the empty set. 
We use $\sqcup$, instead of $\emptyset$, to denote the 0-way interaction. 

Constraints may make it impossible to test some interactions. 
These interactions cannot be covered by any valid tests. 
We call such an interaction \textit{invalid}. 
Formally, an interaction $T$ is \textit{valid} if $T \subseteq \pmb{\sigma}$ 
for some valid test $\pmb{\sigma} \in \mathcal{R}$; it is \textit{invalid}, otherwise. 

As a running example, consider a classic cell-phone example taken from
\cite{Cohen:2008} (Fig.~\ref{fig:example}).
This SUT model has five factors which have three or two values in their domains.
The constraints consist of seven parts.
Test $(1, 0, 1, 1, 1)$, for example, is valid,
whereas test $(1, 0, 0, 0, 1)$ is not valid (invalid)
because it violates the third and fourth constraints.
Similarly, two-way interaction $\{(1, 1), (2, 0)\}$ is valid,
since it occurs in valid test $(1, 0, 1, 1, 1)$.
On the other hand, $\{(2, 0), (3, 0)\}$ is invalid, since
it violates constraint $F_2 = 0 \Rightarrow F_3 \neq 0$ and thus never occurs in any valid tests.

A \textit{test suite} is defined as a (possibly empty) collection of tests and thus can be
represented as an $N \times k$ array $A$ when the number of tests is $N$.
For such an array $A$ and interaction $T$,
we let $\rho_A(T)$ denote the set of tests (rows) of $A$ in which the interaction is covered.
For a set of interactions $\mathcal{T}$,
we define $\rho_A (\mathcal{T}) = \bigcup_{T\in \mathcal{T}}\rho_A(T)$.
We use $\emptyset$ to denote an empty set of interactions. 
Clearly $\rho_A(\emptyset) = \emptyset$. 
(By comparison, $\rho_A(\sqcup)$ is the set of all rows of $A$.)

An interaction is either \textit{failure-triggering} or not.
The result of executing a test $\pmb{\sigma}$ is \textit{fail} iff $\pmb{\sigma}$ covers at least one
failure-triggering interaction;  otherwise the result is \textit{pass}.
Hence the result of executing a test suite $A$ is a vector of size $N$, each element being
either pass or fail.

When there are no constraints, i.e., $\phi(\pmb{\sigma}) = true$ for any
test $\pmb{\sigma} \in S_1 \times \ldots \times S_k$, 
a \textit{Covering Array} (CA) can be used to detect the existence of
fault-triggering interactions of a given strength $t$ or less.
Let $\mathcal{I}_t$ be the set of all $t$-way interactions.
Formally, a $t$-CA is defined by the following condition:

\noindent
\begin{tabular}{p{0.15\textwidth}p{0.85\textwidth}} \\
$t$-CA   & $\forall T \in \mathcal{I}_t$: $\rho_A(T) \not =  \emptyset$
\end{tabular}~\\

On the other hand, a \textit{Locating Array} (LA) can be used to 
locate the set of failure-triggering interactions.
Colbourn and McClary introduced a total of six types of LAs in~\cite{colbourn_locatingarray2008}.
The definitions of the two most basic types of LAs are shown below.

\noindent
\begin{tabular}{lp{0.85\textwidth}} \\
$(d,t)$-LA   & $\forall \mathcal{T}_1, \mathcal{T}_2 \subseteq \mathcal{I}_t$ such that
$|\mathcal{T}_1| = |\mathcal{T}_2| = d$ :
$\rho_A(\mathcal{T}_1) = \rho_A(\mathcal{T}_2) \Leftrightarrow
  \mathcal{T}_1 = \mathcal{T}_2$ \\

$(\overline d,t)$-LA   & $\forall \mathcal{T}_1, \mathcal{T}_2 \subseteq \mathcal{I}_t$ such that
$0 \leq |\mathcal{T}_1| \leq d$, $0 \leq |\mathcal{T}_2| \leq d$ :
$\rho_A(\mathcal{T}_1) = \rho_A(\mathcal{T}_2) \Leftrightarrow
  \mathcal{T}_1 = \mathcal{T}_2$
\end{tabular}~\\

The definition of other two types of LAs, namely $(d,\overline t)$-LAs
and $(\overline d,\overline t)$-LAs, requires the notion of \textit{independence}~\cite{colbourn_locatingarray2008}.
Let
$\overline{\mathcal{I}_t}$ be the set of all interactions of strength at most
$t$, i.e., $\overline{\mathcal{I}_t} = \mathcal{I}_0 \cup \mathcal{I}_1 \cup ... \cup \mathcal{I}_t$.
A set of interactions (interaction set) $\mathcal{T} \subseteq \overline{\mathcal{I}_t}$ is
\textit{independent} iff there do not exist two interactions $T, T' \in \mathcal{T}$ with $T \subset T'$.
For example, consider a set of two interactions
$\{\{(1,1)\}, \{(1,1), (2,0)\}\} \ (\subseteq \overline{\mathcal{I}_2})$ for the running example.
This interaction set is not independent because $\{(1,1)\} \subset \{(1,1), (2,0)\}$.
Note that if two interactions $T, T'$ are both failure-triggering and $T \subset T'$,
then the failure triggered by $T$ always masks the failure triggered by $T'$.
Because of this, it is natural to limit the scope of fault localization to independent interaction sets.
Based on $\overline{\mathcal{I}_t}$ and the notion of independent interaction
sets, the two types of LAs are defined  as follows.

\noindent
\begin{tabular}{lp{0.85\textwidth}} \\
$(d,\overline t)$-LA   & $\forall \mathcal{T}_1, \mathcal{T}_2 \subseteq
\overline{\mathcal{I}_t}$ such that
$|\mathcal{T}_1| = |\mathcal{T}_2| = d$ and $\mathcal{T}_1, \mathcal{T}_2$ are independent:
$\rho_A(\mathcal{T}_1) = \rho_A(\mathcal{T}_2) \Leftrightarrow
  \mathcal{T}_1 = \mathcal{T}_2$ \\

$(\overline d,\overline t)$-LA   & $\forall \mathcal{T}_1, \mathcal{T}_2 \subseteq
\overline{\mathcal{I}_t}$ such that
 $0 \leq |\mathcal{T}_1| \leq d$, $0 \leq |\mathcal{T}_2| \leq d$ and $\mathcal{T}_1, \mathcal{T}_2$ are independent:
 $\rho_A(\mathcal{T}_1) = \rho_A(\mathcal{T}_2) \Leftrightarrow
   \mathcal{T}_1 = \mathcal{T}_2$
\end{tabular}

~\\
We do not consider the remaining two types of locating arrays, namely $(\hat d, t)$-LAs
and $(\hat d, \overline{t})$-LAs, because they either exist in trivial cases
or otherwise are equivalent to $(\overline{d}, t)$- and $(\overline{d}, \overline{t})$-LAs.

Figure~\ref{fig:la} shows a $(1, 2)$-LA for the running example shown in Fig.~\ref{fig:example}.
Let $A$ be the LA and $\pmb{a}_i (1\leq i \leq N = 15)$ be the $i$th row.
If the pass/fail result were obtained for all these tests, any failure-triggering
single two-way interaction could be identified.
For example, if only the first test $\pmb{a}_1$ failed, then the failure-triggering interaction
would be determined to be $\{(2,0), (3,0)\}$, because
$\rho(\mathcal{T}) = \{\pmb{a}_1\}$ holds only for $\mathcal{T} = \{\{(2,0), (3,0)\}\}$,
provided that $|\mathcal{T}| = 1$ and $|T|=2$ for $T\in \mathcal{T}$.
However, this array cannot be used for testing the system because of
the constraints.
For example, $\pmb{a}_1$ is not valid and thus cannot be executed
in reality.

\begin{figure}
\centering
{\footnotesize
\begin{tabular}{ccccc}
\hline
0 & 0 & 0 & 0 & 0 \\
0 & 0 & 1 & 1 & 1 \\
0 & 0 & 2 & 0 & 1 \\
0 & 1 & 1 & 0 & 0 \\
0 & 2 & 0 & 0 & 1 \\
1 & 0 & 1 & 0 & 1 \\
1 & 0 & 2 & 1 & 1  \\
1 & 1 & 0 & 0 & 1 \\
1 & 1 & 2 & 1 & 0  \\
1 & 2 & 0 & 1 & 0 \\
2 & 0 & 2 & 0 & 0  \\
2 & 1 & 1 & 1 & 1 \\
2 & 2 & 0 & 1 & 0 \\
2 & 2 & 1 & 0 & 0 \\
2 & 2 & 2 & 1 & 1 \\
\hline
\end{tabular}
}

\caption{(1,2)-LA for the running example. Constraints are not taken into account.}
\label{fig:la}
\end{figure}



\section{Constrained locating arrays}
\label{sec:cla}

In the presence of constrains, a test suite must consist of only valid tests.
From now on, we assume that an array $A$ representing a test
suite consists of a (possibly empty) set of valid tests.
In practice, this problem has been
circumvented by, instead of CAs, using Constrained Covering Arrays (CCAs).
Let $\mathcal{VI}_t$ be the set of all valid $t$-way interactions.
Then a CCA of strength~$t$, denoted as $t$-CCA, is defined as follows.

\noindent
\begin{tabular}{p{0.15\textwidth}p{0.85\textwidth}} \\
$t$-CCA   & $\forall T \in \mathcal{VI}_t$: $\rho_A(T) \not =  \emptyset$
\end{tabular}~\\~\\
In words, a $t$-CCA is an array that covers all valid interactions of strength~$t$.
It is easy to see that a $t$-CCA, $t\geq 1$ is a $(t-1)$-CCA. Therefore, the 
above definition is equivalent to:

\noindent 
\begin{tabular}{p{0.15\textwidth}p{0.85\textwidth}} \\
	$t$-CCA   & $\forall T \in \overline{\mathcal{VI}_t}$: $\rho_A(T) \not =  \emptyset$
\end{tabular}~\\~\\
Figure~\ref{fig:2cca} shows a 2-CCA for the running example.

When incorporating constraints into LA, it is crucial to take
into consideration, in addition to the presence of invalid interactions,
the fact that constraints may make it impossible to
identify some set of failure-triggering interactions, which
could be identified if no constraints existed.
This requires us the notion of \textit{distinguishability} to formally
define CLAs.

\begin{definition}
  \label{def:1}
A pair of sets of valid interactions, $\mathcal{T}_1$ and $\mathcal{T}_2$,
are \textit{distinguishable}
iff $\rho_A(\mathcal{T}_1) \neq \rho_A(\mathcal{T}_2)$ for some array $A$
consisting of valid tests.
\end{definition}
For the running example, $\mathcal{T}_1 = \{\{(1,0), (3,0)\}\},
\mathcal{T}_2 = \{\{(2,2), (3,0)\}\}$ are not distinguishable
(indistinguishable), since
any valid test contains either both of the two-way interactions or none of them.
That is, tests that cover exactly one of the two interaction sets
(e.g., (0 1 0 0 0) or (1 2 0 0 0)) are all invalid.
Hence no array $A$ exists such that $\rho_A(\mathcal{T}_1) \neq \rho_A(\mathcal{T}_2)$.

It should be noted that even if there are no constraints, there can be some
indistinguishable pairs of interaction sets.
In the running example,
two interaction sets $\{\{(4, 0)\}, \{(4,1)\}\}$, $\{\{(5, 0)\}, \{(5,1)\}\}$
are indistinguishable even if the constraints were removed,
because any test has 0 or 1 on factors $F_4$ and $F_5$.
Another extreme case is when $\mathcal{T}_1$ and $\mathcal{T}_2$ are identical.
Clearly, identical interactions are always indistinguishable.







\begin{definition}
  \label{def:2}
Let $d \geq 0$ and  $0 \leq t \leq k$.
Let $\mathcal{VI}_t$ be the set of all valid $t$-way interactions
and  $\overline \mathcal{VI}_t$ be the set of all valid interactions of strength at most $t$.
An array $A$ that consists of valid tests or no rows is a $(d,t)$-,
$(\overline d,t)$-, $(d, \overline t)$- or
$(\overline d, \overline t)$-CLA iff the corresponding condition shown below holds.

\noindent
\begin{tabular}{lp{0.8\textwidth}}
$(d,t)$-CLA  & $\forall \mathcal{T}_1, \mathcal{T}_2 \subseteq \mathcal{VI}_t$ such that
$|\mathcal{T}_1| = |\mathcal{T}_2| = d$ and $\mathcal{T}_1, \mathcal{T}_2$ are distinguishable:
$\rho_A(\mathcal{T}_1) \neq \rho_A(\mathcal{T}_2)$  \\

$(\overline d,t)$-CLA  & $\forall \mathcal{T}_1, \mathcal{T}_2 \subseteq \mathcal{VI}_t$ such that
$0 \leq |\mathcal{T}_1| \leq d$, $0 \leq |\mathcal{T}_2| \leq d$ and $\mathcal{T}_1, \mathcal{T}_2$ are distinguishable:
$\rho_A(\mathcal{T}_1) \neq \rho_A(\mathcal{T}_2)$  \\

$(d,\overline t)$-CLA  & $\forall \mathcal{T}_1, \mathcal{T}_2 \subseteq
 \overline{\mathcal{VI}}_t$ such that
 $|\mathcal{T}_1| = |\mathcal{T}_2| = d$ and $\mathcal{T}_1, \mathcal{T}_2$ are independent
 and distinguishable:
$\rho_A(\mathcal{T}_1) \neq \rho_A(\mathcal{T}_2)$  \\

 $(\overline d,\overline t)$-CLA  &  $\forall \mathcal{T}_1, \mathcal{T}_2 \subseteq
 \overline{\mathcal{VI}}_t$ such that
 $0 \leq |\mathcal{T}_1| \leq d$, $0 \leq |\mathcal{T}_2| \leq d$ and $\mathcal{T}_1, \mathcal{T}_2$ are independent
 and distinguishable:
$\rho_A(\mathcal{T}_1) \neq \rho_A(\mathcal{T}_2)$  \\

\end{tabular}

\noindent
(In extreme cases where no two such interaction sets $\mathcal{T}_1, \mathcal{T}_2$
exist, any $A$ is a CLA.)
\end{definition}

The intuition of the definition is that
if the SUT has a set of $d$ (or $\leq d$) failure-triggering interactions, then
the test outcome obtained by executing all tests in $A$ will be
different from the one that would be obtained when the SUT had
a different set of $d$ (or $\leq d$) failure-triggering interactions,
unless the two interaction sets are not distinguishable.

The following theorem follows from the definition.
\begin{theorem}\label{theorem:subsumption}
A $(\overline d,\overline t)$-CLA is a $(\overline d, t)$- and $(d,\overline t)$-CLA.
A $(\overline d, t)$-CLA and
a $(d,\overline t)$-CLA are both a $(d, t)$-CLA.
A $(\overline d,\overline t)$-CLA and a $(\overline d, t)$-CLA are
a $(\overline{d -1},\overline t)$-CLA and a $(\overline{d -1}, t)$-CLA, respectively.
A $(\overline d,\overline t)$-CLA
and a $(d, \overline t)$-CLA are
a $(\overline{d},\overline{t-1})$-CLA and a $(d, \overline{t-1})$-CLA, respectively.
\end{theorem}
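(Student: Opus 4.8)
The plan is to exploit the fact that all four CLA conditions in Definition~\ref{def:2} share the identical conclusion $\rho_A(\mathcal{T}_1) \neq \rho_A(\mathcal{T}_2)$ and differ only in the family of ordered pairs $(\mathcal{T}_1,\mathcal{T}_2)$ for which that conclusion is demanded. Hence the statement ``an $X$-CLA is a $Y$-CLA'' is equivalent to the purely set-theoretic assertion that every pair quantified over by the $Y$-condition also belongs to the family quantified over by the $X$-condition: if that inclusion holds, then each pair that $Y$ requires to be separated is already separated under the hypothesis that $A$ is an $X$-CLA, so $A$ satisfies the $Y$-condition as well. I would therefore reduce every one of the implications in the theorem to checking one such inclusion of quantifier domains (the vacuous extreme case noted after Definition~\ref{def:2} is absorbed automatically, since an empty family of pairs imposes no constraint).

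Three elementary facts drive all the inclusions. First, $\mathcal{VI}_t \subseteq \overline{\mathcal{VI}}_t$ and $\overline{\mathcal{VI}}_{t-1} \subseteq \overline{\mathcal{VI}}_t$, directly from the definition of $\overline{\mathcal{VI}}_t$ as a union over strengths. Second, the size restriction $|\mathcal{T}|=d$ is a special case of $0 \le |\mathcal{T}| \le d$, and $0 \le |\mathcal{T}| \le d-1$ is a special case of $0 \le |\mathcal{T}| \le d$. Third, and this is the one genuinely non-routine observation, every $\mathcal{T} \subseteq \mathcal{VI}_t$ is automatically independent: any two distinct strength-$t$ interactions have the same cardinality $t$, so neither can be a proper subset of the other, and the independence condition holds trivially. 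I would also note that distinguishability is an intrinsic property of the pair $(\mathcal{T}_1,\mathcal{T}_2)$ — it quantifies over all valid arrays $A$ — and so is unaffected by which ambient strength $t$ or bound $d$ is in force; the same is true of independence, which depends only on the sets themselves. These remarks let distinguishability and independence pass unchanged through every inclusion.

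With these in hand each implication is immediate. The two claims in the third sentence of the theorem (reducing $d$ to $d-1$) and the two in the fourth sentence (reducing $t$ to $t-1$) use only the first and second facts, since the subsumed and subsuming types have the same flavour with respect to strength ($t$ versus $\overline t$) and independence; the required inclusion is merely ``smaller size bound'' or ``smaller ambient strength.'' The implications from $(\overline d,\overline t)$-CLA to $(d,\overline t)$-CLA and from $(\overline d,t)$-CLA to $(d,t)$-CLA use only the observation that $|\mathcal{T}|=d$ is a special case of $0\le|\mathcal{T}|\le d$. The remaining implications, from $(\overline d,\overline t)$-CLA to $(\overline d,t)$-CLA and from $(d,\overline t)$-CLA to $(d,t)$-CLA, are the ones where the third fact is indispensable: the subsumed type here ranges over $\mathcal{VI}_t$ with no independence requirement, whereas the subsuming type ranges over the larger $\overline{\mathcal{VI}}_t$ but imposes independence, so I would invoke the automatic independence of any set of pure strength-$t$ interactions to certify that each subsumed pair really does belong to the subsuming family. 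This independence step is the only place the argument rises above bookkeeping, and it is the point I would state explicitly; everything else is routine verification of set inclusions.
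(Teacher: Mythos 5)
Your proof is correct and is essentially the argument the paper has in mind: the paper states Theorem~\ref{theorem:subsumption} without proof, remarking only that it ``follows from the definition,'' and your reduction of each implication to an inclusion of quantifier domains (with the conclusion $\rho_A(\mathcal{T}_1) \neq \rho_A(\mathcal{T}_2)$ fixed, and distinguishability and independence being intrinsic to the pair) is exactly the natural formalization of that remark. You also correctly isolate the one non-trivial ingredient --- any $\mathcal{T} \subseteq \mathcal{VI}_t$ is automatically independent because distinct interactions of equal strength cannot properly contain one another --- which is precisely what makes the implications into the independence-constrained types, namely $(\overline d,\overline t)$-CLA $\Rightarrow$ $(\overline d,t)$-CLA and $(d,\overline t)$-CLA $\Rightarrow$ $(d,t)$-CLA, go through.
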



Theorem~\ref{theorem:equiv} states that
when there are no constraints,
an LA, if existing, and a CLA are equivalent.

\begin{theorem}\label{theorem:equiv}
Suppose that the SUT has no constraints, i.e., $\phi(\pmb{\sigma}) = true$ for
all $\pmb{\sigma} \in V_1 \times \ldots \times V_k$,
and that an LA $A$ exists.
Then 1) $A$ is a CLA with the same parameters,
and 2) any CLA with the same parameters as $A$ is an LA
(which is possibly different from $A$) with the same parameters.
\end{theorem}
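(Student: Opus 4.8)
The plan is to reduce both parts to a single structural observation: with no constraints every test and every interaction is valid, so $\mathcal{VI}_t=\mathcal{I}_t$ and $\overline{\mathcal{VI}}_t=\overline{\mathcal{I}_t}$, and the cardinality and independence restrictions attached to each of the four types coincide between the LA and CLA definitions. Consequently, for a fixed type the LA condition and the CLA condition range over pairs subject to \emph{the same} restrictions; the only formal difference is that the LA condition is the biconditional $\rho_A(\mathcal{T}_1)=\rho_A(\mathcal{T}_2)\Leftrightarrow\mathcal{T}_1=\mathcal{T}_2$, whereas the CLA condition demands $\rho_A(\mathcal{T}_1)\neq\rho_A(\mathcal{T}_2)$ only for \emph{distinguishable} pairs. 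I will argue uniformly over the four types, since the restrictions themselves never enter the reasoning.

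For Part~1, I would take any pair $\mathcal{T}_1,\mathcal{T}_2$ admissible for the CLA condition, i.e.\ satisfying the restrictions and distinguishable. Because identical interaction sets are always indistinguishable (observed just before Definition~\ref{def:2}), distinguishability forces $\mathcal{T}_1\neq\mathcal{T}_2$. Applying the LA biconditional to $A$ then gives $\rho_A(\mathcal{T}_1)\neq\rho_A(\mathcal{T}_2)$, which is exactly the CLA condition, so $A$ is a CLA of the same type.

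For Part~2, the key lemma---and the step I expect to be the main obstacle---is that the \emph{existence} of an LA $A$ collapses distinguishability to mere distinctness on the relevant domain. This is not automatic even without constraints: the sets $\{\{(4,0)\},\{(4,1)\}\}$ and $\{\{(5,0)\},\{(5,1)\}\}$ are distinct yet indistinguishable, since each has $\rho$ equal to all rows of every array. The argument is that if $\mathcal{T}_1\neq\mathcal{T}_2$ were indistinguishable, then by Definition~\ref{def:1} $\rho_{A'}(\mathcal{T}_1)=\rho_{A'}(\mathcal{T}_2)$ for \emph{every} array of valid tests, in particular for $A'=A$; but the LA biconditional on $A$ forces $\rho_A(\mathcal{T}_1)\neq\rho_A(\mathcal{T}_2)$ whenever $\mathcal{T}_1\neq\mathcal{T}_2$, a contradiction. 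Hence no distinct indistinguishable pair survives the restrictions, i.e.\ on the relevant domain distinguishable is equivalent to distinct.

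Given this lemma, Part~2 is immediate. Let $B$ be any CLA with the same parameters and let $\mathcal{T}_1\neq\mathcal{T}_2$ satisfy the restrictions; by the lemma they are distinguishable, so the CLA condition on $B$ yields $\rho_B(\mathcal{T}_1)\neq\rho_B(\mathcal{T}_2)$. Since the converse implication $\mathcal{T}_1=\mathcal{T}_2\Rightarrow\rho_B(\mathcal{T}_1)=\rho_B(\mathcal{T}_2)$ is trivial, $B$ satisfies the full LA biconditional and is therefore an LA of the same type. I would finally remark that the extreme ``no two such sets exist'' clause in Definition~\ref{def:2} is handled vacuously throughout.
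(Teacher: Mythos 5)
Your proposal is correct and follows essentially the same route as the paper's proof: Part~1 uses that distinguishability forces distinctness and then invokes the LA biconditional, and Part~2 uses the existence of the LA $A$ to show that distinct admissible pairs are distinguishable (the paper does this inline and directly, you as an explicit lemma by contradiction), then applies the CLA condition. The only cosmetic differences are your explicit lemma-based packaging and uniform treatment of the four types, where the paper argues for $(d,t)$ and notes the same argument applies to the rest.
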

\begin{proof}
Suppose that $A$ is a $(d, t)$-LA.
Let $\mathcal{T}_1, \mathcal{T}_2 \subseteq \mathcal{I}_t (= \mathcal{VI}_t)$
be any two interaction sets such that
$|\mathcal{T}_1| = |\mathcal{T}_2| = d$.
1) If $\mathcal{T}_1 \neq \mathcal{T}_2$,
then $\rho_A(\mathcal{T}_1) \neq \rho_A(\mathcal{T}_2)$.
If $\mathcal{T}_1 = \mathcal{T}_2$, then they are not distinguishable.
Hence $A$ is a $(d, t)$-CLA.
2) Suppose that an array $A'$ is a $(d, t)$-CLA.
If $\mathcal{T}_1 \neq \mathcal{T}_2$, then
$\rho_A(\mathcal{T}_1) \neq \rho_A(\mathcal{T}_2)$ and thus they are distinguishable,
which in turn implies $\rho_{A'}(\mathcal{T}_1) \neq \rho_{A'}(\mathcal{T}_2)$.
If $\mathcal{T}_1 = \mathcal{T}_2$, then they are not distinguishable
and trivially  $\rho_{A'}(\mathcal{T}_1) = \rho_{A'}(\mathcal{T}_2)$.
Hence $A'$ is a $(d, t)$-LA.
The same argument applies to the other three types of LAs.
\end{proof}

It should be noted that
a CLA always exists whether there are constraints or not,
as will be shown in Theorem~\ref{theorem:exist}.
On the other hand, LAs do not always exist.
For example, no $(2,1)$-LAs exist for the running example:
Consider $\mathcal{T}_1 = \{\{(4, 0)\}, \{(4, 1)\}\}$ and
$\mathcal{T}_2 = \{\{(5, 0)\}, \{(5, 1)\}\}$.
Then $\rho_A(\mathcal{T}_1)$ and $\rho_A(\mathcal{T}_2)$ both include all rows; thus
$\rho_A(\mathcal{T}_1) = \rho_A(\mathcal{T}_2)$ for any $A$.
The guarantee of the existence of CLAs comes from the definition which exempts
indistinguishable pairs of interaction sets from fault localization.
In that sense, CLAs can be viewed as a ``best effort'' variant of LAs.

\begin{lemma}\label{lemma:dist}
A pair of sets of valid interactions, $\mathcal{T}_1$ and $\mathcal{T}_2$, are distinguishable
iff there is a valid test that covers some interaction in $\mathcal{T}_1$ or $\mathcal{T}_2$
but no interactions in $\mathcal{T}_2$ or $\mathcal{T}_1$, respectively, i.e.,
for some valid test $\pmb{\sigma} \in \mathcal{R}$, $(\exists T \in \mathcal{T}_1: T \subseteq \pmb{\sigma})
        \land (\forall T \in \mathcal{T}_2: T \not \subseteq \pmb{\sigma})$ or
$(\exists T \in \mathcal{T}_2: T \subseteq \pmb{\sigma}) \land
  (\forall T \in \mathcal{T}_1: T \not \subseteq \pmb{\sigma})$.
\end{lemma}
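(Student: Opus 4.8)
The plan is to unfold Definition~\ref{def:1} and prove both directions of the biconditional by a direct definition chase, first noting that the right-hand condition of Lemma~\ref{lemma:dist} is symmetric under swapping $\mathcal{T}_1$ and $\mathcal{T}_2$. This symmetry lets me treat only one representative disjunct in each direction and invoke symmetry for the other, avoiding duplicated work.

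For the ($\Leftarrow$) direction, I would assume a valid test $\pmb{\sigma} \in \mathcal{R}$ that covers some interaction of $\mathcal{T}_1$ but no interaction of $\mathcal{T}_2$ (the other disjunct being symmetric). The key move is to exhibit an explicit witness array: take $A$ to be the single-row array whose only row is $\pmb{\sigma}$. This array is admissible for Definition~\ref{def:1} precisely because $\pmb{\sigma} \in \mathcal{R}$ is valid. Then $\rho_A(\mathcal{T}_1) = \{\pmb{\sigma}\}$ since $\pmb{\sigma}$ covers some interaction of $\mathcal{T}_1$, whereas $\rho_A(\mathcal{T}_2) = \emptyset$ since $\pmb{\sigma}$ covers none, so $\rho_A(\mathcal{T}_1) \neq \rho_A(\mathcal{T}_2)$ and the pair is distinguishable by definition.

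For the ($\Rightarrow$) direction, I would start from distinguishability, so by Definition~\ref{def:1} there is an array $A$ of valid tests with $\rho_A(\mathcal{T}_1) \neq \rho_A(\mathcal{T}_2)$. Since these two sets of rows differ, their symmetric difference is nonempty, so some row $\pmb{\sigma}$ of $A$ lies in exactly one of them; without loss of generality $\pmb{\sigma} \in \rho_A(\mathcal{T}_1) \setminus \rho_A(\mathcal{T}_2)$. Unfolding the definition of $\rho_A$, membership in $\rho_A(\mathcal{T}_1)$ means $\exists T \in \mathcal{T}_1: T \subseteq \pmb{\sigma}$, and non-membership in $\rho_A(\mathcal{T}_2)$ means $\forall T \in \mathcal{T}_2: T \not\subseteq \pmb{\sigma}$. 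Because every row of $A$ is valid, $\pmb{\sigma} \in \mathcal{R}$, which is exactly the asserted condition.

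The argument is essentially routine, so I do not expect a substantive obstacle. The only points meriting care are confirming that the single-row witness in the ($\Leftarrow$) direction is an admissible array (valid since $\pmb{\sigma}\in\mathcal{R}$), and organizing the two directions so that the symmetry of the right-hand condition is used cleanly to dispatch the second disjunct without repeating the reasoning.
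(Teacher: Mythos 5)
Your proof is correct and follows essentially the same approach as the paper: the ($\Leftarrow$) direction uses the same witness-array idea (the paper takes any array containing $\pmb{\sigma}$, you take the single-row array), and the ($\Rightarrow$) direction is the same definition chase, except that you argue directly by extracting a row from the nonempty symmetric difference of $\rho_A(\mathcal{T}_1)$ and $\rho_A(\mathcal{T}_2)$, whereas the paper proves the contrapositive. That difference is purely stylistic; if anything your direct version avoids the paper's negation-pushing and is slightly cleaner.
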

\begin{proof}
(If part) Suppose that there is such a valid test $\pmb{\sigma}$. Consider an array
$A$ that contains $\pmb{\sigma}$.
Then,  either $\pmb{\sigma} \in \rho_A(\mathcal{T}_1) \land \pmb{\sigma} \not\in \rho_A(\mathcal{T}_2)$
or  $\pmb{\sigma} \not\in \rho_A(\mathcal{T}_1) \land \pmb{\sigma} \in \rho_A(\mathcal{T}_2)$; thus
$\rho_A(\mathcal{T}_1) \neq \rho_A(\mathcal{T}_2)$.
(Only if part) Suppose that there is no such valid test, i.e.,
for every valid test $\pmb{\sigma}$,
$(\forall T \in \mathcal{T}_1: T \not\subseteq \pmb{\sigma})
        \lor (\exists T \in \mathcal{T}_2: T \subseteq \pmb{\sigma})$
and
$(\forall T \in \mathcal{T}_2: T \not\subseteq \pmb{\sigma}) \lor (\exists T \in \mathcal{T}_1: T \subseteq \pmb{\sigma})$.
This means that for every valid test $\pmb{\sigma}$,
$(\forall T \in \mathcal{T}_1: T \not\subseteq \pmb{\sigma}) \land (\forall T \in \mathcal{T}_2: T \not\subseteq \pmb{\sigma})$
or
$(\exists T \in \mathcal{T}_1: T \subseteq \pmb{\sigma}) \lor (\exists T \in \mathcal{T}_2: T \subseteq \pmb{\sigma})$.
Hence for any test $\pmb{\sigma}$ in $A$, $\pmb{\sigma} \not \in \rho_A(\mathcal{T}_1) \land \pmb{\sigma} \not \in \rho_A(\mathcal{T}_2)$
or $\pmb{\sigma} \in \rho_A(\mathcal{T}_1)
\land \pmb{\sigma} \in \rho_A(\mathcal{T}_2)$.
As a result, for any $A$, $\rho_A(\mathcal{T}_1) = \rho_A(\mathcal{T}_2)$.
\end{proof}

\begin{theorem}\label{theorem:exist}
If $A$ is an array consisting of all valid tests, then $A$ is a CLA
with any parameters.
\end{theorem}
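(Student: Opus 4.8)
The plan is to reduce the statement to a single application of Lemma~\ref{lemma:dist}, handling all four CLA types at once. Observe that each of the four defining conditions in Definition~\ref{def:2} has exactly the same shape: for every pair $\mathcal{T}_1, \mathcal{T}_2$ of valid interaction sets that are \emph{distinguishable} (and that satisfy some additional size and/or independence restrictions depending on the type), one must verify $\rho_A(\mathcal{T}_1) \neq \rho_A(\mathcal{T}_2)$. The size and independence restrictions only shrink the collection of pairs that need to be checked, so it suffices to prove the separation $\rho_A(\mathcal{T}_1) \neq \rho_A(\mathcal{T}_2)$ for \emph{every} distinguishable pair of valid interaction sets; this single claim then implies all four cases simultaneously, for arbitrary parameters $d$ and $t$.

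First I would fix an arbitrary distinguishable pair $\mathcal{T}_1, \mathcal{T}_2$ of sets of valid interactions. By Lemma~\ref{lemma:dist}, distinguishability supplies a valid test $\pmb{\sigma} \in \mathcal{R}$ that covers some interaction of one set but none of the other---say, without loss of generality, $\exists T \in \mathcal{T}_1: T \subseteq \pmb{\sigma}$ while $\forall T \in \mathcal{T}_2: T \not\subseteq \pmb{\sigma}$. The crucial step is that, by hypothesis, $A$ consists of \emph{all} valid tests, so $\pmb{\sigma}$ is itself a row of $A$. Consequently $\pmb{\sigma} \in \rho_A(\mathcal{T}_1)$ but $\pmb{\sigma} \notin \rho_A(\mathcal{T}_2)$, whence $\rho_A(\mathcal{T}_1) \neq \rho_A(\mathcal{T}_2)$. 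The symmetric case is identical.

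There is no real obstacle here: the combinatorial content has already been absorbed into Lemma~\ref{lemma:dist}, which guarantees a witnessing valid test whenever two interaction sets are distinguishable. The only thing to notice is that taking $A$ to contain every valid test is precisely what forces that witness to appear as a row of $A$, turning the abstract existence statement of the lemma into a concrete row that separates $\rho_A(\mathcal{T}_1)$ from $\rho_A(\mathcal{T}_2)$. I would close by remarking that, together with Definition~\ref{def:2}, this establishes that $A$ meets all four CLA conditions for every choice of $d$ and $t$, so $A$ is a CLA with any parameters; in particular, this confirms that a CLA always exists, as was anticipated in the discussion preceding the theorem.
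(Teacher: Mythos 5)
Your proposal is correct and follows essentially the same route as the paper's own proof: invoke Lemma~\ref{lemma:dist} to obtain a valid witnessing test $\pmb{\sigma}$, observe that $A$ contains $\pmb{\sigma}$ since it contains all valid tests, and conclude $\rho_A(\mathcal{T}_1) \neq \rho_A(\mathcal{T}_2)$. Your explicit remark that the size and independence restrictions in Definition~\ref{def:2} only shrink the set of pairs to be checked is a nice clarification of a point the paper leaves implicit, but it does not change the substance of the argument.
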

\begin{proof}
Let $\mathcal{T}_1$ and $\mathcal{T}_2$ be any interaction sets that are distinguishable.
By Lemma~\ref{lemma:dist}, a valid test $\pmb{\sigma}$ exists such that
$(\exists T \in \mathcal{T}_1: T \subseteq \pmb{\sigma})
        \land (\forall T \in \mathcal{T}_2: T \not \subseteq \pmb{\sigma})$
or
$(\exists T \in \mathcal{T}_2: T \subseteq \pmb{\sigma}) \land
(\forall T \in \mathcal{T}_1: T \not \subseteq \pmb{\sigma})$.
Since $A$ contains this test and by the same argument of the proof of the if-part of Lemma~\ref{lemma:dist},
$\rho_A(\mathcal{T}_1) \neq \rho_A(\mathcal{T}_2)$.
\end{proof}
Although Theorem~\ref{theorem:exist} guarantees that a test suite
consisting of all valid tests is a CLA, it is desirable to
use a smaller test suite in practice.
In Section~\ref{sec:generation}, we present a computational method
for generating small CLAs.





\subsection{Examples of CLAs} 

Here we show (1,1)-, $(\overline{2}, 1)$-, $(1, \overline{2})$- and $(\overline{2}, \overline{2})$-CLAs for the running SUT example.
Figures~\ref{fig:cla11}, \ref{fig:cla21}, \ref{fig:cla12} and \ref{fig:cla22} respectively show these CLAs.
The sizes (i.e., the number of rows) of these arrays are 5, 12, 15 and 28.
The number of valid tests for the running example is~31; thus
these CLAs, except the $(\overline{2}, \overline{2})$-CLA, are
considerably smaller than the array that consists of all valid tests.
On the other hand, the $(\overline{2}, \overline{2})$-CLA is almost as large as
the exhaustive one.
The three missing valid tests are (0, 2, 1, 0, 1), (0, 2, 1, 1, 1) and (1, 1, 1, 1, 1).

One can verify that these are indeed CLAs by checking the necessary and sufficient
conditions using the facts shown below. For the running example,
all interactions of strength $\leq 2$ are valid,
except ten two-way interactions
listed below.
\begin{quote}
	\begin{tabular}{cccc}
		$\{(1, 2), (2, 0)\}$
		&			$\{(1, 1), (3, 0)\}$
		&			$\{(1, 2), (3, 0)\}$
		&			$\{(1, 2), (4, 0)\}$ \\
		$\{(1, 2), (5, 0)\}$
		&			$\{(2, 0), (3, 0)\}$
		&			$\{(2, 1), (3, 0)\}$
		&			$\{(3, 2), (4, 0)\}$ \\
		$\{(3, 2), (5, 0)\}$
		&			$\{(4, 1), (5, 0)\}$
	\end{tabular}
\end{quote}
For the example,
all pairs $\mathcal{T}_1, \mathcal{T}_2 \subseteq {\mathcal{VI}}_{t=1}$
such that $\mathcal{T}_1 \neq \mathcal{T}_2$ and $|\mathcal{T}_1| = |\mathcal{T}_2| = d = 1$ are
distinguishable.
That is, any pair of distinct one-way interactions are distinguishable.
Figure~\ref{fig:pairs} shows pairs of interaction sets that are \textit{not} distinguishable
for the other parameters $d (\overline{d}), t (\overline{t})$.

\begin{figure}
	\centering
	{\footnotesize
		\begin{tabular}{ccccc}
			\hline
			0  &    0  &    1  &    0  &    1\\
			0  &    0  &    2  &    1  &    1\\
			0  &    1  &    1  &    0  &    0\\
			0  &    2  &    0  &    0  &    0\\
			0  &    2  &    0  &    1  &    1\\
      1  &    0  &    1  &    0  &    0\\
      1  &    1  &    1  &    0  &    0\\
      1  &    1  &    2  &    1  &    1\\
      1  &    2  &    2  &    1  &    1\\
      2  &    1  &    2  &    1  &    1\\
      2  &    2  &    1  &    1  &    1\\
			\hline
		\end{tabular}
	}
	\caption{2-CCA for the running example.}
	\label{fig:2cca}
\end{figure}

\begin{figure}
	\centering
	{\footnotesize
		\begin{tabular}{ccccc}
			\hline
			0  &    0  &    1  &    1  &    1\\
			0  &    2  &    0  &    0  &    1\\
			1  &    1  &    1  &    0  &    0\\
			1  &    2  &    2  &    1  &    1\\
			2  &    1  &    1  &    1  &    1\\
			\hline
		\end{tabular}
	}
	\caption{(1,1)-CLA for the running example.}
	\label{fig:cla11}
\end{figure}

\begin{figure}
	\centering
	{\footnotesize
		\begin{tabular}{ccccc}
			\hline
			0  &    0  &    1  &    0  &    0\\
			0  &    0  &    2  &    1  &    1\\
			0  &    1  &    1  &    0  &    0\\
			0  &    1  &    2  &    1  &    1\\
			0  &    2  &    0  &    0  &    0\\
			0  &    2  &    0  &    0  &    1\\
			0  &    2  &    0  &    1  &    1\\
			1  &    0  &    1  &    0  &    1\\
			1  &    2  &    1  &    0  &    0\\
			1  &    2  &    2  &    1  &    1\\
			2  &    1  &    1  &    1  &    1\\
			2  &    2  &    2  &    1  &    1\\
			\hline
		\end{tabular}
	}
	\caption{$(\overline{2},1)$-CLA for the running example.}
	\label{fig:cla21}
\end{figure}

\begin{figure}
	\centering
	{\footnotesize
		\begin{tabular}{ccccc}
			\hline
			0  &    0  &    1  &    0  &    0\\
			0  &    0  &    2  &    1  &    1\\
			0  &    1  &    1  &    0  &    1\\
			0  &    2  &    0  &    0  &    0\\
			0  &    2  &    0  &    0  &    1\\
			0  &    2  &    0  &    1  &    1\\
			0  &    2  &    2  &    1  &    1\\
			1  &    0  &    1  &    0  &    1\\
			1  &    0  &    1  &    1  &    1\\
			1  &    1  &    1  &    0  &    0\\
			1  &    1  &    2  &    1  &    1\\
			1  &    2  &    1  &    0  &    0\\
			2  &    1  &    1  &    1  &    1\\
			2  &    1  &    2  &    1  &    1\\
			2  &    2  &    1  &    1  &    1\\
			\hline
		\end{tabular}
	}

	\caption{$(1,\overline{2})$-CLA for the running example.}
	\label{fig:cla12}
\end{figure}

\begin{figure}
	\centering
	\begin{tabular}{ccccc}
		\hline
		0 & 0 & 1 & 0 & 0\tabularnewline
		0 & 0 & 1 & 0 & 1\tabularnewline
		0 & 0 & 1 & 1 & 1\tabularnewline
		0 & 0 & 2 & 1 & 1\tabularnewline
		0 & 1 & 1 & 0 & 0\tabularnewline
		0 & 1 & 1 & 0 & 1\tabularnewline
		0 & 1 & 1 & 1 & 1\tabularnewline
		0 & 1 & 2 & 1 & 1\tabularnewline
		0 & 2 & 0 & 0 & 0\tabularnewline
		0 & 2 & 0 & 0 & 1\tabularnewline
		0 & 2 & 0 & 1 & 1\tabularnewline
		0 & 2 & 1 & 0 & 0\tabularnewline
		0 & 2 & 2 & 1 & 1\tabularnewline
		1 & 0 & 1 & 0 & 0\tabularnewline
		1 & 0 & 1 & 0 & 1\tabularnewline
		1 & 0 & 1 & 1 & 1\tabularnewline
		1 & 0 & 2 & 1 & 1\tabularnewline
		1 & 1 & 1 & 0 & 0\tabularnewline
		1 & 1 & 1 & 0 & 1\tabularnewline
		1 & 1 & 2 & 1 & 1\tabularnewline
		1 & 2 & 1 & 0 & 0\tabularnewline
		1 & 2 & 1 & 0 & 1\tabularnewline
		1 & 2 & 1 & 1 & 1\tabularnewline
		1 & 2 & 2 & 1 & 1\tabularnewline
		2 & 1 & 1 & 1 & 1\tabularnewline
		2 & 1 & 2 & 1 & 1\tabularnewline
		2 & 2 & 1 & 1 & 1\tabularnewline
		2 & 2 & 2 & 1 & 1\tabularnewline
		\hline
	\end{tabular}
	\caption{$(\overline{2}, \overline{2})$-CLA for the running example.}
	\label{fig:cla22}
\end{figure}

\begin{figure}
	\centering

	\begin{tabular}{cc}
		\{\{(1, 0)\}\},\
		\{\{(1, 0)\}, \{(3, 0)\}\} &

		\{\{(2, 2)\}\},\
		\{\{(2, 2)\}, \{(3, 0)\}\} \\

		\{\{(4, 0)\}\},\
		\{\{(4, 0)\}, \{(5, 0)\}\} &

		\{\{(4, 1)\}\},\
		\{\{(1, 2)\}, \{(4, 1)\}\} \\

		\{\{(4, 1)\}\},\
		\{\{(3, 2)\}, \{(4, 1)\}\} &

		\{\{(5, 1)\}\},\
		\{\{(1, 2)\}, \{(5, 1)\}\} \\

		\{\{(5, 1)\}\},\
		\{\{(3, 2)\}, \{(5, 1)\}\} &

		\{\{(5, 1)\}\},\
		\{\{(4, 1)\}, \{(5, 1)\}\} \\

		\{\{(1, 2)\}, \{(4, 1)\}\},\
		\{\{(3, 2)\}, \{(4, 1)\}\} &

		\{\{(1, 2)\}, \{(5, 1)\}\},\
		\{\{(3, 2)\}, \{(5, 1)\}\} \\

		\{\{(1, 2)\}, \{(5, 1)\}\},\
		\{\{(4, 1)\}, \{(5, 1)\}\} &

		\{\{(3, 2)\}, \{(5, 1)\}\},\
		\{\{(4, 1)\}, \{(5, 1)\}\} \\

		\{\{(4, 0)\}, \{(4, 1)\}\},\
		\{\{(4, 0)\}, \{(5, 1)\}\} &

		\{\{(4, 0)\}, \{(4, 1)\}\},\
		\{\{(5, 0)\}, \{(5, 1)\}\} \\

		\{\{(4, 0)\}, \{(5, 1)\}\},\
		\{\{(5, 0)\}, \{(5, 1)\}\} &
	\end{tabular}

	~\\
	(a) $\mathcal{T}_1, \mathcal{T}_2 \subseteq {\mathcal{VI}}_{t=1}$
	such that $|\mathcal{T}_1| = |\mathcal{T}_2| \leq d = 2$ and $\mathcal{T}_1$ and $\mathcal{T}_2$
	are distinct but indistinguishable.

	~\\~\\

	\begin{tabular}{cc}
		\{\{(1, 0), (3, 0)\}\},\  \{\{(2, 2), (3, 0)\}\} &
		\{\{(1, 2), (4, 1)\}\},\  \{\{(1, 2), (5, 1)\}\} \\
		\{\{(3, 2), (4, 1)\}\},\  \{\{(3, 2), (5, 1)\}\} \\
	\end{tabular}

	~\\
	(b) $\mathcal{T}_1, \mathcal{T}_2 \subseteq \overline{\mathcal{VI}}_{t=2}$
	such that $|\mathcal{T}_1| = |\mathcal{T}_2| = d = 1$ and $\mathcal{T}_1$ and $\mathcal{T}_2$
	are distinct but indistinguishable. (Since $|\mathcal{T}_1| = |\mathcal{T}_2| = 1$,
	$\mathcal{T}_1, \mathcal{T}_2$ are trivially independent.)

	~\\~\\

	\begin{tabular}{l}

		\{ \{(1,0)\} \},\  \{ \{(1,0)\}, \{(3,0)\} \} \\

		\{ \{(1,2), (4,1)\} \},\  \{ \{(1,2), (4,1)\}, \{(1,2), (5,1)\}  \} \\

		\{ \{(1,0), (2,0)\}, \{(1,1), (2,0)\} \},\  \{ \{(1,2), (4,1)\}, \{(1,2), (5,1)\}  \} \\
		...

	\end{tabular}

	~\\
	(c) Some examples of $\mathcal{T}_1, \mathcal{T}_2 \subseteq \overline{\mathcal{VI}}_{t=2}$
	such that $|\mathcal{T}_1| = |\mathcal{T}_2| \leq d = 2$ and $\mathcal{T}_1$ and $\mathcal{T}_2$
	are independent and indistinguishable.

	\caption{Indistinguishable pairs of sets of interactions }
	\label{fig:pairs}
\end{figure}

\section{Computational generation of $(\overline{1}, t)$-CLAs}
\label{sec:generation}


In this and next sections, we focus our attention on generation of $(\overline{1}, \overline{t})$-CLAs 
for practical reasons as follows.
As demonstrated in the previous section, when the value of $d$ (or $\overline{d}$)  exceeds 
one, the size of CLAs may become substantially larger than $t$-CCAs, offsetting the very benefit of 
combinatorial interaction testing. 
Also, practical test suites must distinguish the situation where no fault exists 
from that where some hypothesized fault occurs; thus we consider $(\overline{1}, \overline{t})$-CLAs, 
instead of $(1, \overline{t})$-CLAs. 

In this section, we propose an algorithm for generating $(\overline{1}, \overline{t})$-CLAs. 
Although not much research exists on generation of LAs,
there has already been a large body of research on CCA generation in the combinatorial interaction testing field. 
The idea of the proposed algorithm is to make use of an existing CCA generation algorithm
to generate $(\overline{1}, \overline{t})$-CLAs. 
This becomes possible by the theorem below, which establishes the relations 
between CCAs and  $(\overline{1}, \overline{t})$-CLAs. 

\begin{theorem}\label{theorem:cca2cla}
	Let $t$ be an integer such that $0 \leq t < k$.
	If an $N \times k$ array $A$ is a $(t+1)$-CCA, then $A$ is also a $(\overline{1}, \overline{t})$-CLA.
\end{theorem}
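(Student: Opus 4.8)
The plan is to unfold the definition of a $(\overline{1},\overline{t})$-CLA, reduce it to a short and concrete case analysis, and then use the covering property of the $(t+1)$-CCA to exhibit, for each distinguishable pair, an explicit row of $A$ that separates them. First I would observe that in a $(\overline{1},\overline{t})$-CLA the two interaction sets $\mathcal{T}_1,\mathcal{T}_2$ each have size at most $1$, so the independence requirement holds vacuously and every relevant pair consists of an empty set or a singleton drawn from $\overline{\mathcal{VI}}_t$. This leaves three cases: both empty (identical, hence indistinguishable, nothing to check); one empty and one a singleton $\{T\}$; and two singletons $\{T_1\},\{T_2\}$.

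For the empty-versus-singleton case, I would note that $\rho_A(\emptyset)=\emptyset$, while $T$ is a valid interaction of strength at most $t<t+1$. Since $A$ is a $(t+1)$-CCA, it covers every valid interaction of strength at most $t+1$, so $\rho_A(\{T\})=\rho_A(T)\neq\emptyset$, giving $\rho_A(\mathcal{T}_1)\neq\rho_A(\mathcal{T}_2)$ immediately. Here distinguishability is automatic, in agreement with the conclusion we must reach.

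The main case, and the crux of the argument, is two singletons $\{T_1\}\neq\{T_2\}$ that are distinguishable. By Lemma~\ref{lemma:dist} there is a valid test $\pmb{\sigma}$ covering one of them but not the other; without loss of generality $T_1\subseteq\pmb{\sigma}$ and $T_2\not\subseteq\pmb{\sigma}$. Since $T_2\not\subseteq\pmb{\sigma}$, I pick a pair $(j,w)\in T_2$ with $(j,w)\notin\pmb{\sigma}$, so that $\sigma_j\neq w$. I would then split on whether factor $j$ already occurs in $T_1$. If $(j,\sigma_j)\in T_1$, then any row of $A$ covering the valid interaction $T_1$ (such a row exists because $A$ is a $(t+1)$-CCA) has value $\sigma_j\neq w$ on factor $j$, hence cannot cover $T_2$, so it lies in $\rho_A(T_1)\setminus\rho_A(T_2)$. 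Otherwise factor $j$ is absent from $T_1$, and I form $T_1'=T_1\cup\{(j,\sigma_j)\}$; this is covered by the valid test $\pmb{\sigma}$, hence valid, and has strength at most $t+1$ (using $t<k$ so that such an interaction exists). As a $(t+1)$-CCA, $A$ covers $T_1'$, and any covering row covers $T_1$ but, by its value $\sigma_j\neq w$ on factor $j$, fails to cover $T_2$. In both subcases $\rho_A(T_1)\neq\rho_A(T_2)$, as required.

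The step I expect to be the main obstacle is this last one: distinguishability only guarantees a separating test in \emph{some} array, whereas the conclusion requires separation within the fixed array $A$. The device that bridges the gap is to absorb the distinguishing factor $(j,\sigma_j)$ into $T_1$, producing a valid interaction of strength at most $t+1$ that the CCA is forced to cover; the hypothesis $t<k$ is exactly what ensures this extended interaction does not exceed strength $k$. Minor care is also needed to handle the degenerate strength-$0$ interaction $\sqcup$ and to confirm that the size-$\leq 1$ sets are trivially independent, so that the condition being verified is genuinely that of a $(\overline{1},\overline{t})$-CLA.
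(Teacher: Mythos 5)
Your proof is correct and follows essentially the same route as the paper's: the same case split on the sizes of $\mathcal{T}_1,\mathcal{T}_2$, the same use of the $(t+1)$-CCA covering property (extended to all valid interactions of strength at most $t+1$) for the empty-versus-singleton case, and the same key device of absorbing a factor-value pair from the separating valid test into $T_1$ to obtain a valid interaction of strength at most $t+1$ that $A$ is forced to cover. The only cosmetic difference is how the two-singleton case is organized --- you split on whether the distinguishing factor already occurs in $T_1$, while the paper splits on whether the two interactions conflict on a common factor --- but the separating row is produced in the same way.
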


\begin{proof}
	Recall that an array $A$ is a $(\overline{1}, \overline{t})$-CLA iff
	$\rho_A(\mathcal{T}_1) \neq \rho_A(\mathcal{T}_2)$ for all 
	$\mathcal{T}_1, \mathcal{T}_2 \subseteq \overline{\mathcal{VI}_t}$ such that
	$0 \leq |\mathcal{T}_1| \leq 1$, $0 \leq |\mathcal{T}_2| \leq 1$, and $\mathcal{T}_1$ and $\mathcal{T}_2$ are distinguishable. (See Definition~\ref{def:2}. Note that $\mathcal{T}_1$ and $\mathcal{T}_2$ are independent, 
	since they contain at most one interaction.)

	Now suppose that an $N\times k$ array $A$ is a $(t+1)$-CCA  such that $0 \leq t < k$.
	If $|\mathcal{T}_1|= |\mathcal{T}_2| = 0$, then $\mathcal{T}_1 = \mathcal{T}_2 = \emptyset$
	and thus they are not distinguishable.
	If $|\mathcal{T}_1|= 1$ and $|\mathcal{T}_2| = 0$, then $\rho_A(\mathcal{T}_1) \neq \emptyset$ 
	because $A$ is a $(t+1)$-CCA and thus any $T \in \overline{\mathcal{VI}_{t+1}}$ is covered by some row in $A$. 
	Since $\rho_A(\emptyset) = \emptyset$, $\rho_A({\cal T}_1) \neq \rho_A({\cal T}_2) = \emptyset$ holds for any ${\cal T}_1$, ${\cal T}_2 \subseteq \overline{{\cal VT}_t}$ if $|\mathcal{T}_1|= 1$ and $|\mathcal{T}_2| = 0$.
	The same argument clearly holds if $|\mathcal{T}_1|= 0$ and $|\mathcal{T}_2| = 1$.

	In the rest of the proof, we consider the case in which  $|\mathcal{T}_1|= 1$ and $|\mathcal{T}_2| = 1$. 
	We will show that $\rho_A(T_a)\not=\rho_A(T_b)$ (i.e. $\rho_A(\{T_a\})\not=\rho_A(\{T_b\})$) always holds for any $T_a, T_b \in \overline{{\cal VI}_t}$ if $\{T_a\}$ and $\{T_b\}$ are distinguishable.
	Let $T_a=\{(F_{a_1},u_{a_1}),\dots,(F_{a_l},u_{a_l})\}$ and $T_b=\{(F_{b_1},v_{b_1}),\dots,(F_{b_m},v_{b_m})\}$
	($0\leq l, m \leq t$).
	Also let ${\rm F}=\{F_{a_1},\dots,F_{a_l}\}\cap\{F_{b_1},\dots,F_{b_m}\}$; {\it i.e.},
	${\rm F}$ is the set of factors that are involved in both interactions.
	There are two cases to consider.
	
	\begin{itemize}
		\item[(1)] For some $F_i\in{\rm F}, u_i\not=v_i$. That is, the two interactions have different values 
		on some factor $F_i$. 
		In this case, $T_a$ and $T_b$ never occur in the same test.
		Since $A$ is a $(t+1)$-CCA, $\rho_A(T_a) \neq \emptyset$ and $\rho_A(T_b) \neq \emptyset$.
		Hence, $\rho_A(T_a)\not=\rho_A(T_b)$.
		
		\item[(2)] ${\rm F}=\emptyset$ or for all $F_i\in{\rm F}, u_i=v_i$. 
		That is, the two interactions have no common factors or have the same value for every factor in common. 
		Since $\{T_a\}$ and $\{T_b\}$ are distinguishable,
		there must be at least one valid test $\boldsymbol\sigma$ in $\mathcal{R}$ that covers 
		either $T_a$ or $T_b$ but not both. 
		Suppose that $\boldsymbol{\sigma}$ covers $T_a$ but does not cover $T_b$. 
		In this case, there is a factor $F_j\in\{F_{b_1},\dots,F_{b_m}\}\backslash{\rm F}$ such that the value on $F_j$ of $\boldsymbol\sigma$, denoted $w_j$,
		is different from $v_j$, because otherwise $T_b$ were covered by $\boldsymbol\sigma$.
		Now consider a $(l+1)$-way interaction $T_a'=T_a\cup\{(F_j,w_j)\}$.
		Since the valid test $\boldsymbol\sigma$ covers $T_a'$, $T_a'$ is a $(l+1)$-way valid interaction.
		Since $A$ is a $(t+1)$-CCA and $l + 1 \leq t + 1$, 
		$A$ contains at least one row that covers $T_a'$.
		This row covers $T_a$ but does not cover $T_b$ because the value on $F_j$ is $w_j$ and $w_j\not=v_j$.
		Hence, $\rho_A(T_a)\not=\rho_A(T_b)$. 
		The same argument applies to the case in which $\boldsymbol{\sigma}$ covers $T_b$ but not $T_a$
	\end{itemize}
	As a result, $\rho_A({\cal T}_1) \neq \rho_A({\cal T}_2)$ holds for any ${\cal T}_1$, ${\cal T}_2 \subseteq \overline{{\cal VT}_t}$ if $|\mathcal{T}_1|= |\mathcal{T}_2| = 1$ and they are distinguishable.
\end{proof}

This theorem shows that a $(t+1)$-CCA is also a $(\overline{1},\overline{t})$-CLA, which 
means that one could use existing CCA generation algorithms to obtain $(\overline{1},\overline{t})$-CLAs; 
but a better approach is possible than simply using a $(t+1)$-CCA as a $(\overline{1},t)$-CLA, because
$(t+1)$-CCAs usually have tests that are redundant in locating failure-triggering interactions of strength $t$ or less.
Specifically, we propose a two-step approach as follows:
First, a $(t+1)$-CCA is generated using an off-the-shelf algorithm.
Then the generated CCA is optimized by removing redundant tests.

The following theorem is useful to check whether a test is redundant or not. 
\begin{theorem}\label{theorem:t2bart}
	If an $N \times k$ array $A$ is a $(\overline{1}, t)$-CLA such that $1 \leq t \leq k$, then 
	$A$ is a $(\overline{1}, \overline{t})$-CLA. 
\end{theorem}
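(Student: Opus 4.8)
The plan is to first show that the hypothesis already forces $A$ to be a $t$-CCA, then to strip away all the ``routine'' pairs by re-using Theorem~\ref{theorem:cca2cla} at a smaller strength, leaving a single genuinely new situation to handle. First I would observe that a $(\overline{1},t)$-CLA is necessarily a $t$-CCA: for every $T\in\mathcal{VI}_t$ the sets $\{T\}$ and $\emptyset$ are distinguishable (a valid test covering $T$ witnesses this through Lemma~\ref{lemma:dist}), so the CLA condition gives $\rho_A(\{T\})\neq\rho_A(\emptyset)=\emptyset$; hence every valid $t$-way interaction is covered, and therefore every interaction in $\overline{\mathcal{VI}_t}$ is covered. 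Since $A$ is then a $t$-CCA, i.e.\ a $((t-1)+1)$-CCA with $0\le t-1<k$, Theorem~\ref{theorem:cca2cla} immediately yields that $A$ is a $(\overline{1},\overline{t-1})$-CLA. Reading Definition~\ref{def:2}, this disposes of every distinguishable size-$\le 1$ pair drawn entirely from $\overline{\mathcal{VI}_{t-1}}$, while coverage disposes of the empty-versus-singleton pairs at strength $t$. The only pairs left to check are singletons $\{T_a\},\{T_b\}$ with $T_b$ of strength exactly $t$ and $T_a$ of strength at most $t$; if $T_a$ also has strength $t$ the conclusion is immediate from the hypothesis, so the real work is the \emph{mixed} case where $T_a$ has strength strictly below $t$.

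For a mixed distinguishable pair, let $\mathrm{F}$ be the set of factors common to $T_a$ and $T_b$. If they disagree on some factor of $\mathrm{F}$, then $T_a$ and $T_b$ never co-occur, so $\rho_A(T_a)$ and $\rho_A(T_b)$ are disjoint; both are non-empty because $A$ is a $t$-CCA, and disjoint non-empty sets differ. Otherwise the interactions agree on $\mathrm{F}$, and Lemma~\ref{lemma:dist} supplies a valid test covering exactly one of them. If some valid test $\boldsymbol{\sigma}$ covers $T_a$ but not $T_b$, then $\boldsymbol{\sigma}$ disagrees with $T_b$ on a factor $F_j\notin\mathrm{F}$; adjoining to $T_a$ the pair $(F_j,w_j)$, where $w_j$ is the value of $\boldsymbol{\sigma}$ on $F_j$, produces a valid interaction of strength at most $t$ (here I use $T_a$'s strength $<t$) that $A$ must cover, and every row covering it covers $T_a$ but not $T_b$. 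Hence $\rho_A(T_a)\neq\rho_A(T_b)$ in this situation as well.

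The main obstacle is the last sub-case: no valid test covers $T_a$ without also covering $T_b$, so $\rho_A(T_a)\subseteq\rho_A(T_b)$, yet some valid test $\boldsymbol{\sigma}$ covers $T_b$ but not $T_a$. Here $T_a$ cannot be extended within strength $t$ so as to exclude $T_b$, so coverage alone is insufficient and the full strength-$t$ locating power of the hypothesis must be used. My plan is to build an auxiliary strength-$t$ interaction $T_d$ that extends $T_a$ using only factors of $T_b$, taking $T_b$'s values on them. One then checks that $T_d$ is valid (it lies inside $T_a\cup T_b$, which is covered by any valid test covering $T_a$), that $\{T_d\}$ and $\{T_b\}$ are distinguishable (because $\boldsymbol{\sigma}$ covers $T_b$ but not $T_d\supseteq T_a$), and that $\rho_A(T_d)\subseteq\rho_A(T_b)$ (covering $T_d$ forces covering $T_a$, hence $T_b$). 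Applying the $(\overline{1},t)$-CLA hypothesis to the strength-$t$ pair $\{T_d\},\{T_b\}$ gives $\rho_A(T_d)\neq\rho_A(T_b)$, so some row of $A$ covers $T_b$ but not $T_d$. Since that row agrees with $T_b$, and therefore with $T_d$, on the adjoined factors, it must disagree with $T_a$ on one of $T_a$'s own factors, so it covers $T_b$ but not $T_a$; this establishes $\rho_A(T_a)\neq\rho_A(T_b)$.

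I expect the delicate points to be confined to this final sub-case: verifying that $T_d$ is well-defined, has strength exactly $t$, and is distinct from $T_b$. The latter is where distinguishability is genuinely exploited, since $T_a\subseteq T_b$ would make $\rho_A(T_a)=\rho_A(T_b)$ hold universally and render the pair indistinguishable; ruling this out guarantees $T_a$ owns a factor not in $T_b$, so $T_d\neq T_b$ and the hypothesis applies non-trivially. Once these structural checks are in place, the concluding step — that the witnessing row avoids $T_a$ — is routine, and assembling the cases gives that $A$ is a $(\overline{1},\overline{t})$-CLA.
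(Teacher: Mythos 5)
Your proof is correct and follows essentially the same route as the paper's: first the CLA-implies-$t$-CCA observation (the paper's Lemma~\ref{lemma:cla2cca}), then Theorem~\ref{theorem:cca2cla} to settle all pairs of strength at most $t-1$, and finally the mixed case $|T_a|<|T_b|=t$ via the same two constructions (extending $T_a$ by a value disagreeing with $T_b$ when some valid test covers $T_a$ but not $T_b$, and extending $T_a$ to a $t$-way interaction with $T_b$'s values, to which the $(\overline{1},t)$-CLA hypothesis is applied, otherwise). The only difference is organizational: your sub-cases are mutually exclusive, which lets you show the auxiliary interaction $T_d$ is always valid, whereas the paper's overlapping Cases 1 and 2 force it to handle an invalid $\check{T}$ separately by the same disjointness argument you deploy up front for factors with conflicting values.
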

\begin{proof}
	See the appendix. 
\end{proof}
The theorem claims that a $(\overline{1}, t)$-CLA and a $(\overline{1}, \overline{t})$-CLA are equivalent.
This property is useful for simplifying the check. 
A test is determined to be redundant if its removal does not invalidate 
the condition required for the array to be a $(\overline{1}, \overline{t})$-CLA. 
Because of the equivalence of $(\overline{1}, t)$-CLAs and $(\overline{1}, \overline{t})$-CLAs, 
we can restrict the interactions to be considered to those in $\mathcal{VI}_t$, instead of $\overline{\mathcal{VI}_t}$.

\begin{algorithm}[tb]
	\caption{Algorithm for CLA generation \label{alg:main}}
	\DontPrintSemicolon
	\KwIn{SUT ${\cal M}$, strength $t$}
	\KwOut{\mbox{($\overline{1}, \overline{t}$)}-CLA $A$}
	$A \gets $ \textsc{generateCCA}$({\cal M}, t+1)$\label{alg:ccagen}\;
	\nonl \quad // generate (t+1)-CCA \;

	${\cal VI}_t \gets$  \textsc{getAllInteractions}$(A, t)$\;
	\nonl \quad // get all t-way interactions from the (t+1)-CCA\;
	
	$map \gets \textsc{mapInteractionToRows}({\cal VI}_t, A)$\;
	\nonl \quad // get a mapping that maps $T \in {\cal VI}_t$ to a set of rows $\rho_A(T)$ \;

	\For{{\rm each} row $\boldsymbol{\sigma} \in A$}{
			\nonl// randomly pick a row that has yet to be selected \;
		$map' \gets \textsc{updateMap}(map,  \boldsymbol\sigma)$\;
				\quad	\nonl// get a mapping for the array with $\boldsymbol{\sigma}$ removed \;
		$\mathcal{I}$ $\gets$ \textsc{getInteractions}($\boldsymbol\sigma$, $t$)\;
				\quad	\nonl// get all $t$-way interactions that appear in $\boldsymbol{\sigma}$  \;
		\If{$(\forall T\in{\cal I}: map'(T)\not=\emptyset) \land$ \;
		\quad	 $(\forall T_a \in{\cal I}, \forall T_b \in{\cal VI}_t:
		map(T_a)\not=map(T_b) \Rightarrow 
		map'(T_a)\not=map'(T_b))$}{
			\nonl// test $\boldsymbol\sigma$ is redundant \;
					$A \gets A$ with $\boldsymbol\sigma$ removed \;
			$map \gets map'$
		}
	}
	\Return{A}
\end{algorithm}

Algorithm \ref{alg:main} generates a $(\overline{1},\overline{t})$-CLA using this approach.
The algorithm takes an SUT model $\mathcal{M}$ and strength $t$ as input and finally returns a $(\overline{1},\overline{t})$-CLA $A$.

In the first line of the algorithm,
the function \textsc{generateCCA()}  uses an existing algorithm to generate a $(t+1)$-CCA.
Then the function \textsc{getAllInteractions()}  is called to enumerate all $t$-way interactions the $(t+1)$-CCA contains.
The interactions obtained are the set of all valid $t$-way interactions (i.e., $\mathcal{VI}_t$), because all interactions occurring in a CCA are valid and any $(t+1)$-CCA contains all $t$-way valid interactions.
Once all the valid $t$-way interactions have been collected,
we compute a mapping $map$ that maps each of them to the set of rows of $A$ that cover it; 
that is, $map: T \mapsto \rho_A(T)$ where $T \in \mathcal{VI}_t$.


In each iteration of the for loop, a row $\boldsymbol\sigma$ is randomly chosen from $A$.
Then we compute $map'$ which is a mapping such that $map': T \mapsto \rho_A(T) \backslash \{\boldsymbol{\sigma}\}$.
In other words, $map'$ is $\rho_{A'}(T)$ where $A'$ is the array  
obtained from $A$ by removing $\boldsymbol{\sigma}$
from it.
The function \textsc{updateMap()} is used to obtain $map'$. 
Also we enumerate all $t$-way interactions that are covered by $\boldsymbol\sigma$. 
The set of these interactions is represented by $\mathcal{I}$.

In each iteration of the loop, we check whether $\boldsymbol{\sigma}$ can be removed or not. 
The row can be removed if $A$ remains to be a $(\overline{1}, t)$-CLA 
(equivalently, $(\overline{1}, \overline{t})$-CLA) after the removal. 
This check is performed by checking two conditions. 

One condition is that every valid $t$-way interaction $T$ still has some row that covers it; 
i.e., $map'(T)\not=\emptyset$.
The condition holds if and only if $\rho_{A'}(\mathcal{T}_1) \neq \rho_{A'}(\mathcal{T}_2)$ holds when 
$|\mathcal{T}_1| = 0$ and $|\mathcal{T}_2| = 1$, since 
$|\mathcal{T}_1| = 0$ implies $\mathcal{T}_1 = \emptyset$ which in turn implies $\rho_{A'}(\mathcal{T}_1)  = \emptyset$.

The other condition corresponds to the case $|\mathcal{T}_1| = |\mathcal{T}_2| =1$: 
The condition is that for every pair of valid, mutually distinguishable $t$-way interactions, 
they still have different sets of rows in which they are covered. 
In other words, for  $T_a, T_b \in \mathcal{VI}_t$, 
if $\{T_a\}$ and $\{T_b\}$ are distinguishable, 
then $map'(T_a) \neq map'(T_b)$
 (i.e., $\rho_{A'}(T_a) \neq \rho_{A'}(T_b)$). 
Note that 
$\{T_a\}$ and $\{T_b\}$ are distinguishable iff $map(T_a) \neq map(T_b)$
(i.e., $\rho_{A}(T_a) \neq \rho_{A}(T_b)$), since $A$ is a $(\overline{1}, \overline{t})$-CLA.

Clearly, if an interaction $T$ is not covered by $\boldsymbol{\sigma}$, 
the deletion of $\boldsymbol\sigma$ does not alter the set of rows that cover $T$. 
Hence checking of the first condition can be performed by examining only the interactions covered
by $\boldsymbol{\sigma}$, i.e., those in $\mathcal{I}$, instead of all interactions in $\mathcal{VI}_t$. 
The same is true for checking of the second condition: 
it can be performed by checking each pair of an interaction $T_a$ in $\mathcal{I}$ and another interaction 
$T_b \in \mathcal{VI}_t$.

 
The loop is iterated  until all rows in the initial $A$ have been examined. 
Finally, the resulting $A$ becomes a $(\overline{1}, \overline{t})$-CLA of reduced size.

It should be noted that output $(\overline{1}, \overline{t})$-CLAs vary for different runs 
of the algorithm, even if the initial $A$ (i.e., the $(t+1)$-CCA generated in line~1) is identical for all runs.  
This is because the $(\overline{1}, \overline{t})$-CLAs finally yielded depends 
also on the order of deleting rows. 
For example, suppose that there are only three valid $t$-way interactions $T_1, T_2$ and $T_3$ 
and that $\{T_1\}$, $\{T_2\}$, $\{T_3\}$ are distinguishable with each other. 
Also suppose that after mapping each interaction to rows,
we have  $map(T_1)=\{1,2,3\}, map(T_2)=\{1,2,4\}$ and $map(T_3)=\{4,5\}$.
If the order of deleting rows is $1\rightarrow 2\rightarrow 3\rightarrow 4\rightarrow 5$,
rows 1 and 2 are deleted but rows 3, 4 and 5 are not. 
This is because after deleting rows 1 and 2,
the mapping becomes: $map(T_1)=\{3\}, map(T_2)=\{4\}$ and $map(T_3)=\{4,5\}$; 
thus any further deletion of rows would make some interaction lose all its covering rows 
or make identical the sets of covering rows for some pair of interactions.
However, if the deleting order is $5\rightarrow 4\rightarrow 3\rightarrow 2\rightarrow 1$,
rows 5, 3 and 2 are deleted. The deleting order of rows thus influences the sizes of resulting CLAs.
\section{Evaluation}
\label{sec:evaluation}

In this section, the proposed generation algorithm is evaluated.
Here, we focus on the case $t=2$, \ie,
the generation of $(\overline{1} ,\overline{2})$-CLAs.
The evaluation is performed with respect to two criteria: generation time and sizes (the number of rows) of CLAs.
For comparison, we choose the generation algorithm based on an SMT (Satisfiability Modulo Theories) solver which we have proposed in \cite{Jin2018},
because, to our knowledge, there does not exist another method that generates CLAs.

\subsection{SMT-based generation algorithm}
\label{smt-based}

The SMT-based generation algorithm can be regarded as an adaptation of
constraint solving-based methods for generating CCAs~\cite{nanba_satsolving2012,Banbara:2010} or LAs~\cite{Konishi2017,DBLP:journals/corr/abs-1904-07480}.
In this algorithm, the necessary and sufficient conditions for the existence of a $(\overline{1},t)$-CLA 
(which is equivalent to a $(\overline{1},\overline{t})$-CLA) of a given size $N$
are encoded into a conjunction of logic expressions.
Then, the algorithm uses an SMT solver to find a satisfiable valuation of variables of the logic expressions.
If a satisfiable valuation is found, then it can be interpreted as a CLA.
On the other hand, if there is no satisfiable valuation, then
the non-existence of a CLA of size~$N$ can be concluded.


In the encoding of the conditions of a CLA, each cell of the array is represented as a variable; thus
the array is encoded as a set of $N\times k$ variables.
According to the definition of $(\overline{1}, t)$-CLAs (see Definition \ref{def:2}),
three sets of logic expressions are needed.
One of the three sets enforces that all rows of the array satisfy all of the SUT constraints.
Another one is used to guarantee that each valid $t$-way interaction is covered by at least one row.
This ensures that for every $T \in \mathcal{VI}_t$, $\rho_A(\{T\}) \neq \emptyset = \rho_A(\emptyset)$.
The last one enforces that for every pair of valid $t$-way interactions, $T_a, T_b \in \mathcal{VI}_t$, if
$\{T_a\}$ and $\{T_b\}$ are mutually distinguishable,
there is at least one row covering only one interaction of the pair, i.e., $\rho_A(T_a) \neq \rho_A(T_b)$.

In our experiments, if a $(\overline{1} ,\overline{2})$-CLA is successfully generated within a timeout period,
we will decrease $N$ by 1 and repeats runs of the algorithm until
the SMT solver proves the non-existence of CLAs of size~$N$. 
If a run of the algorithm fails to terminate within the timeout period, the repetition is stopped. 



\subsection{Research Questions and Experiment Settings}

We pose several research questions as follows for better understanding of experimental results.

\begin{researchQuestion}
  \label{rq1}
  How does the proposed algorithm perform with respect to generation time and sizes for generated CLAs?
\end{researchQuestion}

\begin{researchQuestion}
  \label{rq2}
  How different is the performance between the proposed algorithm and the SMT-based algorithm?
\end{researchQuestion}

\begin{researchQuestion}
  \label{rq3}
  Does the proposed algorithm scale to real-world problems?
\end{researchQuestion}


We performed experiments where we applied both algorithms
to a total of 30 problem instances, numbered from 1 to 30.
Benchmarks No.1\--5 are provided as part of the CitLab tool~\cite{Gargantini:citlab:iwct2012}.
Benchmarks No.6\--25 can be found in \cite{Segall2011}.
Large benchmarks, namely, benchmarks No.26\--30 are taken from \cite{Cohen:2008}.
For each problem instance the proposed algorithm was executed 10 times, as it is a nondeterministic algorithm.
On the other hand, the SMT-based algorithm was run only once, since it is deterministic.
The initial value of $N$ for the SMT-based algorithm was set to the size of the smallest CLAs
among those obtained by the 10 runs of the proposed algorithm.
This favors the SMT-based algorithm, since it ensures that the output CLA of the SMT-based algorithm
is never greater in size than those obtained by the proposed heuristic algorithm.

All the experiments were conducted on a machine with 3.2GHz 8-Core Intel Xeon W CPU and 128GB memory, running MacOS Mojave.
We wrote a C++ program that implements the proposed algorithm.
The CIT-BACH tool~\cite{CIT-BACH} was used as a 3-way CCA generator.
The implementation of the SMT-based algorithm was done using C.
The Yices SMT solver~\cite{Dutertre:cav2014} was used in this implementation.
The timeout period was set to 1 hour for every run of both algorithms.

The results of the experiments are shown in Table \ref{tab:compare}.
The two rightmost columns of the table show the benchmark IDs and names.
The third and forth columns show the number of factors and the number of valid two-way interactions
for each benchmark.
The fifth column, marked with an asterisk (*), shows the number of unordered pairs $T_a, T_b (\neq T_a) \in \mathcal{VI}_2$ such that $\{T_a\}$ and $\{T_b\}$
are indistinguishable.

The remaining part of the table is divided into two parts:
one for the proposed algorithm and the other for the SMT-based generation algorithm.
In the proposed algorithm part, the left three columns show the maximum, minimum, and average sizes of the generated $(\overline{1} ,\overline{2})$-CLAs.
In the column labeled ``Average (3-CCA)'', the figures in parentheses indicate the sizes of the 3-way CCAs generated by \textsc{generateCCA()} on Line~\ref{alg:ccagen} in Algorithm~\ref{alg:main}.
The next three columns indicate the maximum, minimum, and average running times.
The running time is the sum of the time used for generating 3-way CCAs and the time used for deleting redundant rows from those 3-way CCAs.
The unit is seconds.

The two rightmost columns show the results of the SMT-based algorithm.
They show, for each problem instance, the size of the smallest CLA obtained and
the running time taken by the algorithm to produce that CLA.
(Thus, the running time does not include the running time of runs with $N>N_{sm}$ and $N < N_{sm}$,
where $N$ is the given size of an array and $N_{sm}$ denotes the size of the smallest CLA.)
As stated above, the algorithm was iterated with decreasing $N$ until 
it failed to solve the problem within the timeout period 
or proves the nonexistence of a CLA of size $N$.
In the latter case, the CLA obtained in the immediately previous iteration
is guaranteed to be optimal in size.
The figures in bold font show the sizes of these optimal $(\overline{1} ,\overline{2})$-CLAs.
The ``T.O.'' marks indicate that even the first iteration with the initial $N$ was not completed
because of timeout.

\begin{sidewaystable*}
  \centering
  \caption{Experimental results that compare  
  	CLA sizes and running times between the proposed heuristic algorithm and the SMT-based algorithm}
  \label{tab:compare}
  \scalebox{0.8}{
    \input{experiCompare.table}
  }
\end{sidewaystable*}

\subsection{Experimental Results}

\paragraph{Answer to RQ~1}
The proposed heuristic algorithm was able to find CLAs for all the benchmarks.
The running time was even less than one second for many of these.
Except for the two largest problem instances, it was at most 90 seconds.
The two exceptional instances are Apache and GCC, both having nearly 200 factors.
Even for these large benchmarks, the algorithm terminated, successfully producing CLAs
within the one hour time limit.
The proposed algorithm was able to generate CLAs that are considerably smaller
than the initial CCAs.
The reduction rate varies for different problem instances;
but it was greater than 50\% for many of the problems.
Even a more than five-fold reduction was observed for some benchmarks,
namely, Insurance (No.~14), NetworkMgmt (No.~15), Services (No.~18),
Storage4 (No.~22), and Storage5 (No.~23).
In summary, the proposed heuristic algorithm is able to generate CLAs
within a reasonable time unless the problem is not very large.
The sizes of CLAs produced by the algorithm are substantially smaller than the initial
3-CCAs.

\paragraph{Answer to RQ~2}
When comparing the running times between both algorithms, the proposed algorithm
shows distinguishing results. For all benchmarks except Car, Movie, Concurrency,
the proposed algorithm achieved orders of magnitude reduction.
The SMT-based algorithm often timed out
even for the benchmarks that the proposed algorithm solved in less than one second.
The difference can be explained as follows.
To generate a CLA, the SMT-based algorithm needs to solve a constraint satisfaction
problem represented by logic expressions.
This problem can be very difficult to solve, especially when the given number of rows, $N$,
approaches to the lower limit of the size of CLAs.
On the other hand, the proposed heuristic simply repeats the check-and-delete process
until all rows are examined.
In the experiments, as stated above, we set the initial $N$ of the SMT-based algorithm 
to the size of the smallest CLA obtained by 10 runs of the proposed heuristic
algorithm. Hence the sizes of the CLAs generated by the SMT-based algorithm were guaranteed not to exceed 
those generated by the proposed heuristic algorithm.
The experimental results show that the SMT-based algorithm was often successful in
further decreasing the sizes of CLAs by, typically, a few rows.
This also suggests that the proposed algorithm rarely produces the minimum (optimal) CLAs.
One possible reason for this is that 3-way CCAs generated by \textsc{generateCCA()}
may not be a superset of any of the optimal CLAs.
Another reason is that resulting CLAs depends on the order of deleting rows.
As there are a number of deleting orders, it can be unlikely
that the one that leads to the optimal CLA, if any, is selected.
In summary, the proposed heuristic algorithm runs much faster than
does the SMT-based algorithm.
If the problem is small enough for the SMT-based algorithm to
handle, the algorithm is superior in yielding small CLAs to the proposed heuristic
algorithm.

\paragraph{Answer to RQ~3}
As stated, the proposed algorithm was able to produce CLAs
in very short time for many problem instances.
Even for very large benchmarks, namely, Apache and GCC,
it completed generation of CLAs within one hour.
These benchmarks are model taken from the real-world applications.
Hence we conclude that, although further improvement is still desirable,
the proposed algorithm can scale to real-world problems.

\section{Related Work}\label{sec:related}

Constraint handling has been an important issue in combinatorial 
interaction testing, even before the name of this testing approach was 
coined. 
Early work includes, for example,~\cite{Tatsumi87,AETG1997}. 
There is even a systematic literature review that is dedicated to constraint handling 
in combinatorial interaction testing~\cite{8102999}.
This literature review lists 103 research papers addressing this particular problem. 

In contrast, research on LAs is still in an early stage~\cite{7528941}. 
The notion of LAs was originally proposed by Colbourn and McClary~\cite{colbourn_locatingarray2008}. 
Since then, some studies have been published that discuss mathematical properties of LAs 
or propose mathematical constructions of LAs. 
These studies include~\cite{Shi2012,tang_optimalitylocatingarray_2012,Spread2016,colbourn2016}. 
Some other studies propose computational generation methods of LAs~\cite{nagamoto_locatingpairwisetesting2014,Konishi2017,DBLP:journals/corr/abs-1904-07480,10.1007/978-3-319-94667-2_29}. 
None of these previous studies consider constraints. 
Recent surveys on the state of locating array research and its applications can be found in 
\cite{Colbourn2018,7528941}.

Mathematical objects similar to LAs include 
\textit{Detecting Arrays}~\cite{colbourn_locatingarray2008,ShiDetecting2012} 
and \textit{Error Locating Arrays}~\cite{ELA2010}.   
To our knowledge, no attempts have been reported to incorporate constraints 
into these arrays, either. 

We for the first time introduced the concept of CLA in~\cite{DBLP:journals/corr/abs-1801-06041}, which is a preprint of an early version 
of this paper. 
This paper extends the early version by incorporating our subsequent work~\cite{8639696}, 
where we showed the heuristic algorithm for obtaining CLAs for the first time. 
Originally we presented it as a method of generating $(\overline{1}, t)$-CLAs, 
instead of $(\overline{1}, \overline{t})$-CLAs. 
This paper extends \cite{8639696}  
by providing new theorems (namely, Theorems~\ref{theorem:cca2cla} and~\ref{theorem:t2bart}) to show that the algorithm can yield $(\overline{1}, \overline{t})$-CLAs 
and 
by providing more comprehensive experimental results using a new, faster implementation of the algorithm.   
The SMT-based algorithm, which was compared with the proposed algorithm in Section~\ref{sec:evaluation}, 
was presented in~\cite{Jin2018}. 

There are studies that address fault location and analysis of test execution results 
(often referred to as fault characterization) without using the mathematical objects mentioned above. 
The studies in this line of research include, for example, \cite{Niu2018TSE,Shakya2012,Zhang2011,Li2012,Wang2010,Fouche2009,Yilmaz2006,Ghandehari2013,Nishiura2017}. 

\section{Conclusions}\label{sec:summary}

In this paper, we introduced the notion of Constrained Locating Arrays (CLA), 
which generalize locating arrays by incorporating constraints on test parameters 
into them. The extension enables to apply locating arrays to testing of 
real-world systems which usually have such constraints.  
We proved some basic properties of CLAs and then 
presented a heuristic algorithm to generate CLAs that can locate at most one
failure triggering interaction. 
Experimental results using a number of practical problem instances 
showed that the proposed algorithm is able to construct CLAs with reasonable time. 
Possible future research directions include, for example, 
developing other algorithms for CLA generation and 
investigating the usefulness of CLAs in real-world testing. 

\appendix

\section{Proof of Theorem~\ref{theorem:t2bart}}

\begin{lemma}\label{lemma:cla2cca}
	Suppose that an $N \times k$ array $A$ is a $(\overline{1}, t)$-CLA such that $1\leq t \leq k$.  
	Then $A$ is a $t$-CCA.
\end{lemma}
\begin{proof}
	Since $A$ is a $(\overline{1}, t)$-CLA, $\rho_A(\mathcal{T}_1) \neq \rho_A(\mathcal{T}_2)$ 
	for any $\mathcal{T}_1, \mathcal{T}_2 (\neq \mathcal{T}_1) \subseteq \mathcal{VI}_t$ such that 
	$|\mathcal{T}_1|, |\mathcal{T}_2| \leq 1$.
	Hence, if $\mathcal{T}_1 = \emptyset$ and $\mathcal{T}_2 = \{T\}$ for any $T \in \mathcal{VI}_t$, 
	then $\rho_A(\mathcal{T}_1) = \rho_A(\emptyset) = \emptyset \neq \rho_A(\mathcal{T}_2) = \rho_A(T)$. 
\end{proof}

\noindent
\textbf{Theorem~\ref{theorem:t2bart}} (in Section~\ref{sec:generation})\textbf{.}
{\it 
	If an $N \times k$ array $A$ is a $(\overline{1}, t)$-CLA such that $1 \leq t \leq k$, then 
$A$ is a $(\overline{1}, \overline{t})$-CLA. 
}

\begin{proof}
	Suppose that $A$ is a $(\overline{1}, t)$-CLA such that $1 \leq t \leq k$. 
	By Lemma~\ref{lemma:cla2cca},  $A$ is a $t$-CCA; thus, by Theorem~\ref{theorem:cca2cla}, 
	it is a $(\overline{1}, \overline{t-1})$-CLA. 
	Recall that $A$ is a $(\overline{1}, \overline{t})$-CLA iff 
	$\rho_A(\mathcal{T}_1) \neq \rho_A(\mathcal{T}_2)$ for all 
	$\mathcal{T}_1, \mathcal{T}_2 \in \overline{\mathcal{VI}_t}$ such that
	$\mathcal{T}_1$ and $\mathcal{T}_2$ are distinguishable
	and  
	$0\leq |\mathcal{T}_1|, |\mathcal{T}_2| \leq 1$. 
	(Note that $\mathcal{T}_1$ and $\mathcal{T}_2$ are trivially independent.)
	If $ |\mathcal{T}_1| = |\mathcal{T}_2| = 0$, then $\mathcal{T}_1$ and $\mathcal{T}_2$ are 
	both $\emptyset$ and thus indistinguishable.
	If $ |\mathcal{T}_1| = 0$ and $|\mathcal{T}_2| = 1$, then $\mathcal{T}_2 = \{T\}$ 
	for some $T \in \overline{\mathcal{VI}_t}$. 
	Since $A$ is a $t$-CCA, $\rho_A(\{T\}) \neq \emptyset$ for any $T \in \overline{\mathcal{VI}_t}$. 
	Therefore $\rho_A(\mathcal{T}_1) \neq \rho_A(\mathcal{T}_2)$. 
	Clearly this argument holds when $ |\mathcal{T}_1| = 1$ and $|\mathcal{T}_2| = 0$.
	
	In the following part of the proof, we assume that $ |\mathcal{T}_1| = |\mathcal{T}_2| = 1$. 
	Let $\mathcal{T}_1 = \{T_a\}$, $\mathcal{T}_2 = \{T_b\}$ where $T_a, T_b \in \overline{\mathcal{VI}_t}$.
	Without losing generality, we assume that the strength of $T_a$ is at most equal to that of $T_b$, 
	i.e., $0 \leq |T_a|\leq |T_b| \leq t$.
	If $0 \leq |T_a|\leq |T_b| \leq t-1$ and $\{T_a\}$ and $\{T_b\}$ are distinguishable, 
	then $\rho_A(\mathcal{T}_1) \neq \rho_A(\mathcal{T}_2)$ since 
	$A$ is a $(\overline{1}, \overline{t-1})$-CLA. 
	If $|T_a| = |T_b| = t$ and $\{T_a\}$ and $\{T_b\}$ are distinguishable, 
	then $\rho_A(\mathcal{T}_1) \neq \rho_A(\mathcal{T}_2)$ since 
	$A$ is a $(\overline{1}, t)$-CLA. 
	
	Now consider the remaining case where $0 \leq |T_a| < |T_b| = t$. 
%
%
	Assume that $\{T_a\}$ and $\{T_b\}$ are distinguishable. 
    Below we show that $\rho_A(\mathcal{T}_1) \neq \rho_A(\mathcal{T}_2)$ 
    under this assumption. 
	Because of the assumption, at least either one of the following two cases holds:   
	Case 1: for some $\boldsymbol{\sigma} \in \mathcal{R}$, $T_a \subseteq \sigma$ and $T_b \not\subseteq \boldsymbol{\sigma}$, or
	Case 2:  
	for some $\boldsymbol{\sigma} \in \mathcal{R}$, 
	$T_a \not\subseteq \boldsymbol{\sigma}$ and $T_b \subseteq \boldsymbol{\sigma}$.
	
	Let $T_a=\{(F_{a_1},u_{a_1}),\dots,(F_{a_l},u_{a_l})\}$ and $T_b=\{(F_{b_1},v_{b_1}),\dots,(F_{b_t},v_{b_t})\}$
	($0\leq l \leq t-1$).
	Also let ${\rm F}=\{F_{a_1},\dots,F_{a_l}\}\cap\{F_{b_1},\dots,F_{b_t}\}$; i.e.,
	${\rm F}$ is the set of factors that are involved in both interactions.
	
	Case~1:  Let $\boldsymbol{\sigma}_1$ be any test in $\mathcal{R}$ such that $T_a \subseteq \boldsymbol{\sigma}_1$ and $T_b \not\subseteq \boldsymbol{\sigma}_1$. 
	Choose a factor $F_{b_i}$, $1 \leq i \leq t$ such that 
	the value on $F_{b_i}$ in $\boldsymbol{\sigma}_1$ is different from $v_{b_i}$. 
    Such a factor must always exist, because otherwise $T_b \subseteq {\boldsymbol{\sigma}_1}$.  
	Let $w_{b_i}$ denote the value on $F_{b_i}$ in $\boldsymbol{\sigma}_1$.
	Then interaction $\hat T = T_a \cup \{(F_{b_i}, w_{b_i})\}$ is 
	covered by $\boldsymbol{\sigma}_1$ ($\hat T \subseteq \boldsymbol{\sigma}_1$) and thus is valid. 
	The strength of $\hat T$ is  
	$l$ (if $F_{b_i} \in \mathrm{F}$,  in which case $u_{b_i} = w_{b_i}$) or $l+1$ (if $F_{b_i} \not\in \mathrm{F}$). 
	For any test $\boldsymbol{\sigma} \in \mathcal{R}$,  $\hat T \subseteq \boldsymbol{\sigma} \Rightarrow T_b \not\subseteq \boldsymbol{\sigma}$ holds
	because $w_{b_i}  \not= v_{b_i}$. 
	Since $A$ is a $t$-CCA and the strength of $\hat T$ is at most $t$, $A$ has a row that covers $\hat T$. 
	This row covers $T_a$ but not $T_b$; thus $\rho_A(\mathcal{T}_1) \neq \rho_A(\mathcal{T}_2)$.
	
	Case 2:
	Let $\boldsymbol{\sigma}_2$ be any test in $\mathcal{R}$ such that $T_a \not \subseteq \boldsymbol{\sigma}_2$ and $T_b \subseteq \boldsymbol{\sigma}_2$.
    Also let $\check T$ be any $t$-way interaction such that $\check T = T_a \cup \{(F_{b_{i_1}}, v_{b_{i_1}})$, $\ldots$, $(F_{b_{i_{t-l}}}, v_{b_{i_{t-l}}})\}$ for some $F_{b_{i_{1}}}, \dots, F_{b_{i_{t-l}}} \not \in F$. 
    In other words, $\check T$ is a $t$-way interaction that is obtained by extending $T_a$ with some $t-l$
    factor-value pairs in $T_b$. 

If $\check T$ is valid, then $\{\check T\}$ and $\{T_b\}$ are distinguishable, 
because $T_b \subseteq \boldsymbol{\sigma}_2$ and $\check T \not \subseteq \boldsymbol{\sigma}_2$ (since $T_a \not \subseteq \boldsymbol{\sigma}_2$ and $T_a \subseteq \check T$). 
$A$ is a $(\overline{1}, t)$-CLA; thus $A$ must have a row $\boldsymbol{r}$ that covers either $\check T$ or $T_b$; 
i.e., $\check T \subseteq r \land T_b \not\subseteq \boldsymbol{r}$ or $\check T \not \subseteq \boldsymbol{r} \land T_b \subseteq \boldsymbol{r}$.
$\check T \subseteq \boldsymbol{r} \land T_b \not\subseteq \boldsymbol{r}$ directly implies $T_a \subseteq \boldsymbol{r} \land T_b \not\subseteq \boldsymbol{r}$,
while $\check T \not \subseteq \boldsymbol{r} \land T_b \subseteq \boldsymbol{r}$ implies $\check T \backslash T_b \not\subseteq \boldsymbol{r}$, which means $T_a \not \subseteq \boldsymbol{r}$. 
Hence $\rho_A(\mathcal{T}_1) \neq \rho_A(\mathcal{T}_2)$.

If $\check T$ is not valid, then we can show that $T_a$ and $T_b$ never appear simultaneously in any test $\boldsymbol{\sigma} \in \mathcal{R}$ as follows. 
If there is some test $\boldsymbol{\sigma}$ in $\mathcal{R}$  in which $T_a$ and $T_b$ are both covered, then $\check T$ is also covered by some tests (including $\boldsymbol{\sigma}$ ) in $\mathcal{R}$; i.e., $\check T$ is valid.  
The contraposition of this argument is that if $\check T$ is invalid, then there is no test in $\mathcal{R}$ that covers $T_a$ and $T_b$. 
Since $A$ is a $t$-CCA and $T_a, T_b \in \overline{\mathcal{VI}_t}$, $\rho_A(T_a) \neq \emptyset$ and $\rho_A(T_b) \neq \emptyset$.  
Hence $\rho_A(\mathcal{T}_1) \neq \rho_A(\mathcal{T}_2)$.
\end{proof}





%
%
%
\end{document}